\newtheorem{theorem}{Theorem}
\newtheorem{assumption}{Assumption}
\newtheorem{proposition}{Proposition}
\newtheorem{lemma}{Lemma}
\newtheorem{remark}{Remark}
\newtheorem{definition}{Definition}
\newtheorem{example}{Example}
\newtheorem{problem}{Problem}
\newcommand{\vect}[1]{\boldsymbol{#1}}
\newcommand{\mat}[1]{\boldsymbol{#1}}
\renewcommand{\eqref}[1]{Eq.~(\ref{#1})}  
\pgfplotsset{ 
  compat=newest, 
}
\pgfplotsset{compat=newest, every tick label/.append style={font=\footnotesize},
every axis label/.append style={font=\footnotesize},  legend style={
             font=\footnotesize},}
\begin{document}

\title{Equilibrium Selection in Replicator Equations Using Adaptive-Gain Control}

\author{Lorenzo Zino$^1$, Mengbin Ye$^2$, Giuseppe C. Calafiore$^{1,3}$,\\and Alessandro Rizzo$^{1,4}$}

\date{\normalsize $^1$ Department of Electronics and Telecommunications, Politecnico di Torino, Torino, Italy\\
$^2$Centre for Optimisation and Decision Science, Curtin University, Perth, Australia\\
$^3$College of Engineering \& Computer Science, VinUniversity, Hanoi, Vietnam\\
$^4$Institute for Invention, Innovation, and Entrepreneurship, New York University Tandon School of Engineering, Brooklyn NY, US}

\maketitle

\begin{abstract}

In this paper, we deal with the {equilibrium selection problem}, which amounts to steering a population of individuals engaged in strategic game-theoretic interactions to a desired collective behavior. In the literature, this problem has been typically tackled by means of open-loop strategies, whose applicability is however limited by the need of accurate a priori information on the game and scarce robustness to uncertainty and noise. Here, we overcome these limitations by adopting a closed-loop approach using an adaptive-gain control scheme within a replicator equation ---a nonlinear ordinary differential equation that models the evolution of the collective behavior of the population. For most classes of $2$-action matrix games we establish  sufficient  conditions to design a controller that guarantees convergence of the replicator equation to the desired equilibrium, requiring limited a-priori information on the game. Numerical simulations corroborate and expand our theoretical findings.
\end{abstract}

\section{Introduction}
\label{sec:introduction}
Evolutionary game theory is a popular mathematical framework, developed to predict the emergent behavior of large populations of  individuals who repeatedly engage in strategic interactions~\cite{Sandholm2010,Hofbauer2009}. Originally proposed to study competition in biological systems~\cite{Smith1973}, it has found applications in a broad range of domains thanks to its flexibility, from  social science to energy and infrastructure networks~\cite{marden2009game,Jiang2014,Barreiro-Gomez2016,Como2021,Stella2022cooperation,Scarabaggio2022,Martins2023}. Within this framework, particular interest has been devoted to \emph{population games}~\cite{Sandholm2010}. In this paradigm, it is assumed that each individual in the population plays a two-player game against all others. Then, based on the game's outcome, they revise their action according to a decision-making process, which is captured at the population-level by way of a \emph{revision protocol}. Technically, revision protocols consist of nonlinear ordinary differential equations (ODEs)~\cite{taylor1978replicator,Hofbauer2009,Sandholm2010} that encapsulate the information of the two-player game and describe the emergent behavior of the population in terms of the evolution of the fraction of adopters of each action over time. Among these revision protocols, one of the most widely used and studied is the \emph{replicator equation}~\cite{taylor1978replicator,Sandholm2010,Cressman2014}. 

For such dynamics, it has been proved that the population converges to a Nash equilibrium (NE)~\cite{Sandholm2010}. However, this NE does not always coincide with the best or the most desirable outcome ---throughout this paper, we use `desirable' from the perspective of a policymaker interested in the collective behavior of the population. For instance, game-theoretic dynamics have been adopted to model human decision-making concerning the adoption of new behaviors and social norms~\cite{montanari2010spread_innovation,kreindler2014rapid_diffusion,ye2021nat}. In this context, the game has typically multiple NEs, and the replicator equation can explain how peer pressure could hinder social change. Moreover, even when the game has a single NE, this may reflect an undesired outcome. This is the case of the prisoner's dilemma, for which  the replicator equation leads to the unique NE of the game, which is however a Pareto-inefficient equilibrium in which all players defect~\cite{Cressman2014}.

Motivated by this wide range of applications, the systems and control community has shown a growing interest in designing and studying individual-level interventions to steering the population to a desired equilibrium~\cite{Quijano2017,riehl2018survey,Grammatico2017}. Such a problem, which is often referred to as \emph{equilibrium selection}~\cite{Kim1996}, has several real-world applications. For instance, in the context of adoption of social norms~\cite{montanari2010spread_innovation,kreindler2014rapid_diffusion,ye2021nat}, an equilibrium selection problem arises whenever a government or a public authority wants to incentivize social change (e.g., toward the adoption of sustainable behavior)~\cite{Fan2020,Selin2021}; in social dilemmas, when a policymaker wants to promote cooperation~\cite{May1981,Stella2022cooperation}.

Several efforts have been made to address the equilibrium selection problem. Most of the approaches rely on open-loop schemes and can be classified into three main families. The first family includes methods that rely on directly controlling the state of some individuals of the population who behave as influencers or committed minority, helping to steer the rest of the population to the desired equilibrium~\cite{morris2000contagion,centola2018experimental_tipping,Como2022targeting}. The second family, instead, consists of methods that impact an individual's decision-making, by incorporating additional behavioral mechanisms, such as enhancing an individual's sensitivity to trends~\cite{ye2021nat,cdc2021,Zino2022nexus}, enforcing reciprocity~\cite{vanVeelen2012,Park2022cooperative}, or acting on social reputation~\cite{Giardini2021}. However, such direct individual-level interventions are not always feasible in the real-world, due to the limited possibility to directly influence an individual's state or decision-making mechanisms. To address these limitations, a third family of approaches has been proposed, in which the control is exerted in an indirect fashion by adjusting the structure of the two-player game played by the population. This captures, e.g., the implementation of incentives to favor a desired action against the others. Such adjustments, however,  require complete information on the  game~\cite{riehl2018incentive,Eksin2020,Gong2022,Zhu2023} or entail nontrivial optimization processes if implemented dynamically~\cite{MartinezPiazuelo2023}, calling for the development of more flexible closed-loop control schemes, which are able to solve the equilibrium selection problem by dynamically adjusting the structure of the two-player game, requiring limited a priori information.

In this paper, we fill in this gap by proposing a novel feedback control scheme to solve the equilibrium selection problem. Specifically,  we adopt an adaptive-gain control scheme, motivated by its successful application to complex systems in other contexts~\cite{ioannou1996robust,yu2012distributed_adaptive,mei2016adaptive,Walsh2023_IFAC}. In particular, we  focus on the broad class of symmetric two-player two-action matrix games~\cite{riehl2018survey}, where each player can chose among two possible actions (e.g., adopting or not adopting a new sustainable behavior in the context of social change; or cooperate vs. defect in social dilemmas), and the structure of the game is captured by a \emph{payoff matrix}. Our proposed approach involves using a gain to control one entry of the payoff matrix, and this gain adaptively changes following an algorithm that takes information from the population state, creating a feedback loop between the controller and the population. Our key objective is to establish sufficient conditions for the design of the adaptive-gain controller to solve the equilibrium selection problem, requiring only limited a priori information on the game. 

Technically, by coupling the replicator equation with an adaptively evolving gain, the nonlinear system of ODEs of the controlled evolutionary dynamics increases in dimension as well as complexity. The result is an increased 
complexity of the system and, thus, the range of possible emergent behaviors. Using systems- and control-theoretic tools~\cite{Pachpatte_book,khalil2002nonlinear}, we  analyzed the obtained nonlinear coupled ODEs, establishing easy-to-verify sufficient conditions under which the adaptive-gain controller succeeds in solving the equilibrium selection problem with limited required information on the payoff matrix, thereby addressing many limitations of existing methods.

After formalizing the adaptive-gain controller and proving its well-posedness, we observe that the equilibrium selection problem has three distinct manifestations, depending on the nature of the desired equilibrium: i) enforcing convergence to a locally (but not globally) stable NE, ii) stabilizing an unstable equilibrium; and iii) steering the system to a point that is not an equilibrium of the uncontrolled replicator equation. For the first scenario, which often arises in the context of social change, we establish sufficient conditions for solving the equilibrium selection problem, depending on the entry of the payoff matrix on which the gain is implemented. For the second scenario, which captures, e.g., promoting cooperation in social dilemmas, we prove that the equilibrium selection problem can be solved only by controlling a specific entry of the payoff matrix. If this entry is controlled, then we establish mild conditions that are sufficient to steer the population to the desired equilibrium. Finally, we tailor the adaptive-gain controller to solve the more challenging third scenario for many relevant classes of games, including the anti-coordination games that captures congestion problems in infrastructure networks~\cite{Jiang2014,Como2022traffic}. 

Some results appeared in~\cite{zino2023_adaptive}, in a preliminary form. Here, we expand on that effort along several directions, including i) a more general formulation and extensive characterization of the equilibrium selection problem in terms of three distinct scenarios; ii) the general treatment of the first two scenarios with the establishment of easy-to-implement sufficient conditions to guarantee convergence to the desired equilibrium (this extends our preliminary results in~\cite{zino2023_adaptive}, which only presented a specific solution to the first two problems); iii) the analysis of the third scenario, which is the most challenging and whose formulation and treatment is entirely new, and iv) numerical simulations to explore the effectiveness of the proposed adaptive-gain controller, giving insights beyond our theoretical guarantees. 

The rest of this paper is organized as follows. In Section~\ref{sec:model}, we present the mathematical model. In Section~\ref{sec:problem}, we propose our adaptive-gain control scheme and formalize the equilibrium selection problem. Our main theoretical results are presented in Sections~\ref{sec:appraoch}--\ref{sec:setpoint}. Section~\ref{sec:conclusions} concludes the paper and suggests future research.

\section{Population Games and Replicator Equation}\label{sec:model}


We consider a (large) population where each individual repeatedly plays the same  two-player matrix game~\cite{riehl2018survey} with all the others, dynamically revising their strategy toward increasing their payoff. The emergent behavior of the population can be studied with the methods developed in evolutionary game theory~\cite{Sandholm2010}. Specifically, we employ a replicator equation~\cite{taylor1978replicator}, which allows to capture the emergent behavior of the population by means of an ODE. Our ultimate goal is to leverage such an ODE to design a feedback control scheme to adaptively change the entries of the payoff matrix using population-level information  in a feedback fashion, to steer the population to a desired equilibrium, as illustrated in Fig.~\ref{fig:schematic}.

\begin{figure}
    \centering
\includegraphics[]{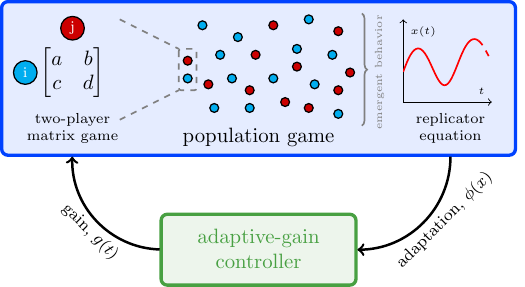}\caption{Schematic of the controlled evolutionary game-theoretic dynamics. }
    \label{fig:schematic}
\end{figure}

In the rest of this section, we introduce the mathematical formalism used thought the paper. Specifically, we start by presenting two-player matrix games and their key properties. Then, we introduce the replicator equation and describe its behavior in the uncontrolled scenario. 

\subsection{Two-player matrix game}

Each individual plays a  symmetric\footnote{A game is symmetric when players share an identical payoff matrix.} two-player matrix game. In this game, each player can choose between two \emph{actions}, termed action $1$ and action $2$, characterized by a payoff matrix 
\begin{equation}\label{eq:payoff}
    {\mat A}=\begin{bmatrix}  a&b\\c&d \end{bmatrix},
\end{equation}
with $a,b,c,d\in\mathbb R$, where $\mathbb R$ is the set of real numbers. In other words, if we consider two players with configuration of actions $\vect{y}=(y_i,y_j)\in\{1,2\}\times\{1,2\}$, then player $i$ would receive payoff equal to $a$ (or $b$) for selecting action $y_i=1$ against an opponent who plays $y_j=1$ (or $y_j=2$). Meanwhile, player~$i$ receives $c$ (or $d$) for selecting action $y_i=2$ against an opponent who plays $y_j=1$ (or $y_j=2$).

An important concept in game theory is that of \emph{Nash equilibrium} (NE), which captures specific configurations such that no player can increase their payoff by unilaterally changing the action played. In the context of symmetric two-player matrix games, we have the following definition of NEs.

\begin{definition}[Nash equilibrium]\label{def:Nash}
In the context of symmetric two-player matrix games,    a \emph{pure NE} is a configuration of actions of the two players $\vect{\bar y}=(\bar y_i,\bar y_j)$ such that $A_{\bar y_i\bar y_j}\geq  A_{s\bar y_j}$ and $A_{\bar y_j\bar y_i}\geq  A_{s'\bar y_i}$, for all $s,s'\in\{1,2\}$. If configurations are allowed to be defined as  probability mass functions over the actions (namely, player $i$ chooses action $1$ with probability $p_i\in[0,1]$ and action $2$ otherwise), then a \emph{mixed NE} is a (probabilistic) configuration $\vect{\bar p}=(\bar p_i,\bar p_j)$ in which no player can increase their expected payoff by unilaterally changing their probability mass function, i.e.,
$\bar p_i\bar p_jA_{11}+\bar p_i(1-\bar p_j)A_{12}+(1-\bar p_i)\bar p_jA_{21}+(1-\bar p_i)(1-\bar p_j)A_{22}\geq q\bar p_jA_{11}+q (1-\bar p_j)A_{12}+(1-q)\bar p_jA_{21}+(1-q)(1-\bar p_j)A_{22}$, for any $q\in[0,1]$, and the same inequality holds for player $j$.
\end{definition}

Following a standard classification for two-player matrix games~\cite{riehl2018survey}, the payoff matrix in \eqref{eq:payoff} determines three different classes of games, summarized in the following.

\begin{proposition}\label{prop:nash}\rm The payoff matrix in \eqref{eq:payoff} determines three classes of games, characterized in terms of their NEs:
\begin{enumerate}
    \item If $d>b$ and $a>c$, we have a \emph{coordination game}, which has two pure NEs $(1,1)$ and $(2,2)$, and a mixed NE, where action $1$ is played by either players with probability equal to
    \begin{equation}\label{eq:mixed}
   x^*:=\frac{d-b}{a+d-b-c}. 
\end{equation}
    \item If $d>b$ and $a<c$, or $d<b$ and $a>c$, we have a \emph{dominant-strategy game}, which has a unique (pure) NE:  $(2,2)$ if $d>b$ and $a<c$, or  $(1,1)$ if $d<b$ and $a>c$.
    \item If $d<b$ and $a<c$, we have an \emph{anti-coordination game}, which has two (pure) NEs $(1,2)$ and $(2,1)$, and a mixed NE, where action $1$ is played with probability equal to $x^*$ from \eqref{eq:mixed} by either players.
\end{enumerate}
\end{proposition}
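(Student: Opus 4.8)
The plan is to work directly from the Nash-equilibrium conditions in Definition~\ref{def:Nash}, treating the pure and mixed equilibria separately. For the pure equilibria I would enumerate the four action profiles $(1,1)$, $(2,2)$, $(1,2)$, $(2,1)$ and impose the no-deviation inequalities. Because the game is symmetric, it suffices to check a single player's incentive to deviate, and each profile then collapses to one comparison between two entries of $\mat A$: profile $(1,1)$ is a pure NE iff $a\geq c$; profile $(2,2)$ iff $d\geq b$; and profiles $(1,2)$ and $(2,1)$ iff simultaneously $a\leq c$ and $d\leq b$. Intersecting these with the three strict sign patterns in the statement immediately sorts the cases: under $a>c,\,d>b$ only $(1,1)$ and $(2,2)$ survive; under a mixed sign pattern ($d>b,\,a<c$ or $d<b,\,a>c$) exactly one coordinate profile survives, giving the unique dominant-strategy NE; and under $a<c,\,d<b$ only $(1,2)$ and $(2,1)$ survive.

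For the mixed equilibrium I would write the expected payoff of a player who adopts action $1$ with probability $q$ against an opponent playing action $1$ with probability $\bar p$. This payoff is affine in $q$, so a genuinely mixed best response can exist only when the two pure actions yield equal expected payoff---the standard indifference condition. Imposing indifference, $\bar p\, a+(1-\bar p)\,b=\bar p\, c+(1-\bar p)\,d$, and solving for $\bar p$ returns exactly $x^*$ from \eqref{eq:mixed}; by symmetry the same value serves both players, yielding the symmetric mixed NE $(x^*,x^*)$.

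The step requiring the most care, and the main obstacle, is to confirm that $x^*\in(0,1)$ precisely in the coordination and anti-coordination cases, so that the candidate is an admissible probability, while no interior mixed NE arises in the dominant-strategy case. In the coordination case both $d-b$ and $a-c$ are positive, so the numerator and the denominator $a+d-b-c=(a-c)+(d-b)$ are positive and the bound $x^*<1$ reduces to $a>c$. In the anti-coordination case both differences are negative; here I would take care that dividing by a \emph{negative} denominator reverses the inequality, which again delivers $x^*\in(0,1)$. For the dominant-strategy games I would instead argue that strict dominance rules out any interior mixed NE: when, e.g., $a>c$ and $b>d$, action $1$ strictly dominates action $2$, so any strategy placing positive weight on the dominated action is never a best response, leaving the single pure profile as the unique equilibrium. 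This dominance argument together with the sign analysis of $x^*$ is the crux; the remaining verifications are routine bookkeeping over the four profiles.
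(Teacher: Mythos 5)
Your proof is correct, but note that the paper itself offers no proof of this proposition: it is stated as a standard classification of symmetric $2\times 2$ games with a citation to the survey literature, so there is nothing internal to compare against. Your argument---enumerating the four pure profiles via the no-deviation inequalities ($(1,1)$ iff $a\geq c$, $(2,2)$ iff $d\geq b$, $(1,2)$ and $(2,1)$ iff $a\leq c$ and $d\leq b$), deriving the mixed candidate from the indifference condition, and then checking $x^*\in(0,1)$ in the coordination/anti-coordination cases while excluding interior equilibria by strict dominance in the dominant-strategy case---is exactly the standard argument that the cited classification rests on, and all the sign manipulations (including the flipped inequality when dividing by the negative denominator) are handled correctly. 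The only step you leave implicit is ruling out asymmetric equilibria in which one player mixes and the other plays a pure action; this follows in one line from your own indifference condition, since a genuinely mixing player forces the opponent's probability to equal the interior value $x^*$, which is incompatible with the opponent playing a pure strategy under the strict inequalities assumed.
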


In the following, we provide one example for each class, which will be used later in the paper to illustrate our findings. %

\begin{example}[Pure coordination game]\label{ex:pure}
A game with diagonal and strictly positive payoff matrix ($b=c=0$, $a,d>0$) is a particular coordination game, called {pure coordination game}, in which a player receives a payoff if and only if they choose the same action of the other player. According to Proposition~\ref{prop:nash}, the game has two pure NEs and a mixed NE at $x^*={d}/{(a+d)}$.
\end{example}

\begin{example}[Prisoner's dilemma]\label{ex:pd}
A game with $c<a<d<b$ is a prisoner's dilemma in which action $1$ and $2$ represent defection and cooperation, respectively. A player receives $d$ for mutual cooperation and $b>d$ for cheating (i.e, to defect if the other cooperates), while $a$ is the punishment for mutual defection, which provides a smaller payoff than mutual cooperation ($d$), but larger than cooperating if the other defects ($c$). Prisoner's dilemma is a dominant-strategy game, with mutual defection $(1,1)$ as the unique NE. This is somewhat counter-intuitive, since the NE is not Pareto optimal: both players would receive a larger payoff if they both cooperate. 
\end{example}

\begin{example}[Minority game]\label{ex:min}
A minority game is a game with zero-diagonal and strictly positive off-diagonal payoff matrix ($a=d=0$, $b,c>0$). This is an anti-coordination game in which players receive a payoff only if they do not coordinate on the same action. The game has two pure NEs at $(1,2)$ and $(2,1)$, and a (unique) mixed NE with $x^*={d}/{(a+d)}$.
\end{example}

\subsection{Replicator equation}

In order to describe the emergent behavior at the population-level, we adopt the standard approach used in population games~\cite{Sandholm2010} of considering the population as a continuum of players of unit mass. Each player repeatedly engages in the two-player matrix game described above with their peers, revising their strategy toward improving their payoff. Specifically, we denote by $x(t)\in[0,1]$ the fraction of adopters of action $1$ at time $t\geq 0$; consequently, the fraction of adopters of action $2$ is equal to $1-x(t)$. Then, the total reward $r_i(x,\mat A)$ associated with action~$i$ is given by the average payoff that a player receives for choosing that action from all games played~\cite{Sandholm2010},  which is equal to
\begin{equation}
   \begin{bmatrix}  r_1(x,{\mat A})\\ r_2(x,{\mat A}) \end{bmatrix}={\mat A}\begin{bmatrix}  x\\1-x \end{bmatrix}= \begin{bmatrix}  ax+b(1-x)\\ cx+d(1-x)\end{bmatrix},
\end{equation}
for action $1$ and $2$, respectively. For the sake of readability, we have omitted to explicitly write the dependence of the variables on time $t$. Throughout this paper, we adopt this convention except when we wish to  highlight such dependence. 

In population games~\cite{Sandholm2010}, the revision of the players' actions is captured at the population-level by means of ODEs. Here, we use the replicator equation~\cite{taylor1978replicator}, yielding:
\begin{equation}\label{eq:replicator}
    \dot x=x(1-x)(r_1(x,{\mat A})-r_2(x,{\mat A}))=x(1-x)((a+d-b-c)x+b-d),
    \end{equation}
with initial condition $x(0)\in[0,1]$. Briefly, \eqref{eq:replicator} captures the tendency of players to imitate their peers who have higher reward: the rate at which individuals switch to action $1$ is proportional to the difference in the reward for playing $1$ with respect to the reward for playing $2$. The (uncontrolled) replicator equation in \eqref{eq:replicator} has always the two \emph{consensus} equilibria $x+0$ and $x=1$, where all individuals adopt the same action. However, the stability of the consensus equilibria and the presence of other equilibria depend on the characteristics of the game. Overall, its behavior, which has been extensively studied in the literature~\cite{taylor1978replicator,Sandholm2010},  is summarized in the following result.

\begin{proposition}\label{prop:convergence_uncontrolled}\rm
Consider the replicator equation in \eqref{eq:replicator}. If the payoff matrix ${\mat A}$ in \eqref{eq:payoff} is
\begin{enumerate}
    \item a coordination game, then $x(t)\to 0$ if $x(0)<x^*$, and $x(t)\to 1$ if $x(0)>x^*$; with $x^*$ from \eqref{eq:mixed};
    \item a dominant-strategy game, and
    \begin{enumerate}
        \item  
 $d>b$ and $a<c$, then $x(t)\to 0$ for any $x(0)<1$;
    \item $d<b$ and $a>c$, then $x(t)\to 1$ for any $x(0)>0$;
       \end{enumerate}
    \item  an anti-coordination game, then $x(t)\to x^*$ for any $x(0)\in(0,1)$, with $x^*$ from \eqref{eq:mixed}.
\end{enumerate}
\end{proposition}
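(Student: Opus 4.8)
The plan is to treat \eqref{eq:replicator} as a scalar autonomous ODE and carry out a standard phase-line analysis. First I would factor the right-hand side as $\dot x = x(1-x)\,g(x)$, where $g(x):=(a+d-b-c)x+b-d$ is affine in $x$. Two features make the dynamics transparent: on the open interval $(0,1)$ the factor $x(1-x)$ is strictly positive, so there the sign of $\dot x$ coincides with the sign of $g(x)$; and $g$, being affine, has at most one zero, which (when it lies in $(0,1)$) is exactly $x^*$ from \eqref{eq:mixed}, since setting $g(x^*)=0$ recovers that expression. The endpoints $x=0$ and $x=1$ are equilibria for every choice of ${\mat A}$, and because the right-hand side is a polynomial (hence locally Lipschitz) solutions are unique; therefore no trajectory starting in $(0,1)$ can reach $0$ or $1$ in finite time, and $[0,1]$ is forward invariant with $(0,1)$ invariant as well.

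The core of the argument is then a case-by-case determination of the sign of $g$ on $(0,1)$ from the defining inequalities of each game class, and since $g$ is affine it suffices to read off its endpoint values $g(0)=b-d$ and $g(1)=a-c$. For a coordination game ($d>b$, $a>c$) we have $g(0)<0$ and $g(1)>0$, so $g$ has its single zero at $x^*\in(0,1)$ with $g<0$ on $(0,x^*)$ and $g>0$ on $(x^*,1)$; thus $x^*$ is a repeller and $0,1$ are attractors. For the anti-coordination game ($d<b$, $a<c$) the signs are reversed, $g(0)>0$ and $g(1)<0$, giving $g>0$ on $(0,x^*)$ and $g<0$ on $(x^*,1)$, so $x^*$ attracts from both sides. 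For the two dominant-strategy cases $g$ takes the same sign at both endpoints (both negative when $d>b,a<c$; both positive when $d<b,a>c$), whence by affinity $g$ is single-signed throughout $[0,1]$ and $x^*\notin(0,1)$: the vector field is strictly monotone on $(0,1)$, pushing toward $0$ in case 2a and toward $1$ in case 2b.

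Finally I would convert these sign patterns into the stated convergence results using the standard fact that a bounded, strictly monotone solution of a scalar autonomous ODE converges to an equilibrium. Concretely, on each subinterval delimited by consecutive equilibria the sign of $\dot x$ is constant, so $x(t)$ is monotone and bounded and therefore has a limit; that limit must itself be an equilibrium, which pins it to $0$, $1$, or $x^*$ according to the direction of flow established above. This reproduces exactly the basins of attraction claimed in the three cases, including the partition of $(0,1)$ at $x^*$ in the coordination case and global convergence to $x^*$ in the anti-coordination case.

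I do not anticipate a genuine obstacle, as the result is classical and the one-dimensional structure trivializes the asymptotics. The only points requiring modest care are (i) verifying that $x^*$ lies in $(0,1)$ precisely for the coordination and anti-coordination games and outside $[0,1]$ for the dominant-strategy games---this follows because an affine function has an interior zero iff its endpoint values $g(0)$ and $g(1)$ have strictly opposite signs---and (ii) justifying forward invariance of $[0,1]$, which rests on uniqueness of solutions so that the equilibria at the endpoints cannot be crossed.
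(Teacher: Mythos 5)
Your proposal is correct and follows essentially the same route as the paper's proof: forward invariance of $[0,1]$, enumeration of the (at most three) equilibria $0$, $1$, $x^*$, and a sign analysis of $\dot x$ in each game class to pin down the limits. Your write-up merely makes explicit the details the paper compresses (the affine factor $g(x)$, its endpoint values, and the monotone-bounded-solution convergence argument), which is a faithful expansion rather than a different approach.
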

\begin{proof}First, observe that $[0,1]$ is invariant under \eqref{eq:replicator}, and hence all trajectories must converge to an equilibrium. Then, observe that \eqref{eq:replicator} has at most three equilibria: the pure configurations $x=1$ and $x=0$, in which the entire population adopts action $1$ and $2$, respectively; and  an additional equilibrium $x^*=\frac{d-b}{a+d-b-c}$, which belongs to the domain $[0,1]$ and is distinct from $x=1$ and $x=0$ if and only if i) $d>b$ and $a>c$, or ii) $d<b$ and $a<c$. In this case, $x^*$ coincides with the mixed NE of the two-player game. Finally, the analysis of the sign of $\dot x$ in \eqref{eq:replicator} yields the claim.
\end{proof}

\begin{figure}
    \centering
    \subfloat[Coordination]{\includegraphics[]{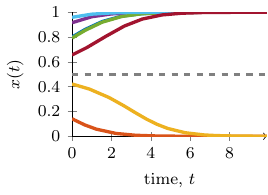}\label{fig:2a}}\quad
    \subfloat[Dominant-strategy]{\includegraphics[]{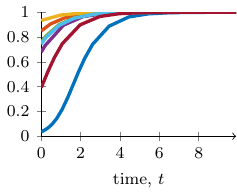}\label{fig:2b}}\quad
    \subfloat[Anti-coordination]{\includegraphics[]{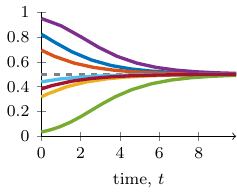}\label{fig:2c}}
\caption{Trajectories of the (uncontrolled) replicator equation in \eqref{eq:replicator} for (a) a pure coordination game (Example~\ref{ex:pure}) $a=d=1$; (b) a prisoner's dilemma (Example~\ref{ex:pd}) with $c=0$, $a=1$, $d=2$, and $b=3$; and (c) a minority game (Example~\ref{ex:min}) with $b=c=1$. Different colors denote different initial conditions. }
    \label{fig:2}
\end{figure}

\begin{remark}\label{rem:equilibria}
The behavior of \eqref{eq:replicator} can be related to the NEs of the corresponding two-player game. In fact, all the NEs of \eqref{eq:payoff} correspond to equilibria of \eqref{eq:replicator}. However, \eqref{eq:replicator} may have (unstable) equilibria that are not NE ---e.g., $x=1$ in scenario 2). Moreover, not all NEs are asymptotically stable ---see, e.g., the mixed NE in scenario 1). More precisely, the equilibria that are asymptotically stable coincide with the evolutionarily stable strategies of the game~\cite{Sandholm2010,Smith1973}. 
\end{remark}

Figure~\ref{fig:2} offers graphical insights into the results summarized in Proposition~\ref{prop:convergence_uncontrolled}. For coordination games (Fig.~\ref{fig:2a}), the (uncontrolled) dynamics converges to a pure configuration that depends on whether the initial condition is above or below the (unstable) mixed NE $x^*$ (represented in the figure by a by a gray dashed line). For dominant-strategy game, such as the prisoner's dilemma in the simulations in Fig.~\ref{fig:2b}, we observe convergence to the all-defector configuration from any (interior) initial condition. Finally, for anti-coordination games, Fig.~\ref{fig:2c} shows convergence to the unique (mixed) NE (gray dashed line), for any (interior) initial condition.

\section{Closed-loop Control Scheme}\label{sec:problem}

We take the perspective of a central policymaker, whose objective is to design an intervention policy that aims to control the emergent behavior of the population (whose dynamics are described by the replicator equation in \eqref{eq:replicator}), steering the population to a desired equilibrium point $x=\bar x$. The intervention acts at the individual-level, by changing the entries of the payoff matrix in \eqref{eq:payoff}, e.g. via incentives. Clearly, an intuitive intervention policy would be to act in a open-loop fashion by permanently changing the payoff matrix to shift the system to resemble a scenario where the desired equilibrium $x=\bar x$ is globally asymptotically stable (using Proposition~\ref{prop:convergence_uncontrolled}). However, this approach has several limitations. First, in many real-world scenarios it may not be feasible, since permanent changes to the payoff structure may be impossible to be implemented or economically unsustainable. Second, in some scenarios it may not be necessary to permanently change the payoff, since interventions might be needed only in a transient phase, or only in some specific conditions. Third, the aforementioned approaches would require exact knowledge of the values in the payoff matrix, which may not be available in many real-world applications or could be subject to noise. For these reasons, we focus on designing control schemes that rely on a closed-loop and adaptive structure, using information at the population-level, as illustrated in Fig.~\ref{fig:schematic}.

\subsection{Adaptive-gain controller}

We assume that we can control one entry of the payoff matrix through an additive gain. The payoff matrix becomes
    \begin{equation}\label{eq:payoff_control}
    {\mat A}(t):= \mat{\hat A}+{\mat G}g(t)=\begin{bmatrix}   a & b\\ c& d \end{bmatrix}+\mat{G}g(t),
\end{equation}
where $\mat{\hat A}$ is the \emph{nominal payoff matrix} of the uncontrolled game from \eqref{eq:payoff}, $g(t):\mathbb R_{+}\to\mathbb R_{+}$ is a continuous function mapping non-negative real numbers into non-negative real numbers that quantifies the \emph{control gain}, and $\mat G$ is a \emph{control matrix} that determines which entry of the payoff is controlled and is selected among the following four matrices:
\begin{equation}\label{eq:matrices}
    \begin{array}{l}
    \mat{G^{(1)}}:=\begin{bmatrix}
    1&0\\0&0
    \end{bmatrix},\,\,   \mat{G^{(2)}}:=\begin{bmatrix}
    0&1\\0&0
    \end{bmatrix},\,\,\mat{G^{(3)}}:=\begin{bmatrix}
    0&0\\1&0
    \end{bmatrix},\,\,   \mat{G^{(4)}}:=\begin{bmatrix}
    0&0\\0&1
    \end{bmatrix}.
\end{array}
\end{equation}

\begin{remark}Theoretically, a controller may act on multiple entries of the payoff matrix, i.e., with a linear combination of the matrices in \eqref{eq:matrices}. However, using the  control matrices in \eqref{eq:matrices} allows us to factorize the control inputs, clearly understanding the challenges and benefits of acting on each one of the entries. Moreover, for many real-world applications, practical limitations may limit the possibility to control multiple entries.   
Consequently, we focus on solving the equilibrium selection problem using the control matrices in \eqref{eq:matrices}. 
\end{remark}

There are essentially two different classes of control matrices: $\mat{G^{(1)}}$ and $\mat{G^{(4)}}$ adaptively increase the payoff associated with coordination on action $1$ or $2$, respectively; whereas $\mat{G^{(2)}}$ and $\mat{G^{(3)}}$ provide an  advantage for choosing one action against a player who plays the other action ($1$ against $2$ and $2$ against $1$, respectively). From a practical viewpoint, these two classes of matrices reflect two opposite types of interventions: the former captures intervention policies aiming at favoring the collective adoption of the desired action; the latter, instead, captures interventions incentivizing innovators and early adopters of the desired action.  For instance, consider a policymaker who wishes to promote sustainable mobility by having people move away from using cars. The first class captures interventions such as offering joint discounted tickets for traveling by public transportation or advantages for sharing cars; the second class captures giving benefits or discounts for those who go to work using bikes or public transportation. In view of these observations, we introduce the following classification.

\begin{definition}\label{ref:gain}
    We refer to \eqref{eq:payoff_control} as
    \begin{itemize}
        \item a \emph{conformity gain controller} if we use $\mat{G^{(1)}}$ or $\mat{G^{(4)}}$; 
        \item an \emph{innovation gain controller} if we use $\mat{G^{(2)}}$ or $\mat{G^{(3)}}$.
    \end{itemize} 
\end{definition}
    
Here, we propose to design the control gains to obtain a closed-loop controller that uses information on the dynamics at the population-level, i.e., $x(t)$, to adaptively change the selected entry of the payoff matrix, as illustrated in Fig.~\ref{fig:schematic}. Hence, we propose that the gain $g(t)$ is governed by the ODE:
\begin{equation}\label{eq:gain}
  \dot g(t)=\phi(x(t))g(t),
\end{equation}
where the \emph{adaptation rate} $\phi(x):[0,1]\to \mathbb R$ is continuously differentiable on its domain, with initial condition $g(0)>0$. As such, while $\mat G$ determines the type of intervention (coordination vs innovation), $g(t)$ represents the (adaptively evolving) amount of incentive.

\begin{remark}
    A possible interpretation for \eqref{eq:payoff_control} and \eqref{eq:gain} is that of incorporating a control action within a payoff dynamics model for evolutionary game dynamics~\cite{Park2019,Arcak2021,MartinezPiazuelo2022}. 
\end{remark}

In summary, our class of adaptive-gain controllers is characterized by the pair $(\mat{G},\phi)$ of control matrix and adaptation rate. The controlled replicator equation can be formulated as the following planar system of coupled nonlinear ODEs:
\begin{equation}\label{eq:controlled_replicator}
\begin{alignedat}{2}  
    \dot x&=&&x(1-x)\big((a+d-b-c)x+b-d\\&&&+(G_{11}-G_{21})gx+(G_{12}-G_{22})g(1-x)\big),\\
    \dot g&=&&\phi(x)g,
    \end{alignedat}
\end{equation}
which is well posed, as proved in the following.

\begin{lemma}\label{lemma:invariance}
The domain $\Omega:=[0,1]\times \mathbb R_{+}$ is positively invariant under \eqref{eq:controlled_replicator}.
\end{lemma}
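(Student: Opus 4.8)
The plan is to combine the standard local existence and uniqueness theory for ordinary differential equations with the observation that each piece of $\partial\Omega$ is itself an invariant set, so that uniqueness of solutions forbids any trajectory from leaving $\Omega$. First I would note that the right-hand side of \eqref{eq:controlled_replicator} is continuously differentiable in $(x,g)$ on a neighborhood of $\Omega$: the first component is a polynomial in $x$ and $g$, while the second is $\phi(x)g$ with $\phi$ assumed $C^1$ on $[0,1]$ by hypothesis. Hence the vector field is locally Lipschitz, and by the Picard--Lindel\"of theorem \cite{khalil2002nonlinear} through every initial condition $(x(0),g(0))\in\Omega$ there passes a unique maximal solution.

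Next I would treat the two vertical faces $\{x=0\}$ and $\{x=1\}$. Because $\dot x$ in \eqref{eq:controlled_replicator} carries the common factor $x(1-x)$, we have $\dot x=0$ whenever $x\in\{0,1\}$, \emph{irrespective of the value of $g$}. Consequently $x(t)\equiv 0$ (respectively $x(t)\equiv 1$) paired with the associated $g$-trajectory solves the system, so each of these lines is invariant. Now suppose $x(0)\in(0,1)$ but $x(t_1)\in\{0,1\}$ for some finite $t_1>0$; then two distinct solutions would pass through the point $(x(t_1),g(t_1))$ --- our interior trajectory and the boundary solution just identified --- contradicting uniqueness. Therefore $x(t)\in(0,1)$ for all $t$ in the interval of existence, and by continuity $x(t)\in[0,1]$ in every case.

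For the remaining face $\{g=0\}$ I would exploit that the $g$-dynamics \eqref{eq:gain} is linear in $g$ with continuous coefficient $t\mapsto\phi(x(t))$, giving the explicit representation $g(t)=g(0)\exp\!\big(\int_0^t\phi(x(s))\,\mathrm ds\big)$; since $g(0)>0$ and the exponential is strictly positive, $g(t)>0$ throughout. Equivalently, $\{g=0\}$ is invariant and the same uniqueness argument applies. Combining the two bounds yields $(x(t),g(t))\in\Omega$ on the maximal interval of existence.

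The only genuinely delicate point is ensuring that this maximal interval is $[0,\infty)$, so that positive invariance is not asserted merely on a truncated interval. Since $\phi$ is continuous on the compact set $[0,1]$ it is bounded, say $|\phi|\le M$, whence $g(t)\le g(0)e^{Mt}$ grows at most exponentially while $x$ remains confined to $[0,1]$; thus no component can escape to infinity in finite time and the solution is defined for all $t\ge 0$. I expect the finite-time non-crossing argument of the second paragraph to require the most care, as it is precisely where uniqueness is used to upgrade the tangency of the vector field along $\partial\Omega$ into genuine forward invariance.
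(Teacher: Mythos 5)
Your proof is correct, and it rests on the same two structural observations that the paper's proof uses: the $g$-equation is linear in $g$ (so $g(t)=g(0)\exp\bigl(\int_0^t\phi(x(s))\,ds\bigr)$ stays positive), and $\dot x$ vanishes identically on the faces $x=0$ and $x=1$. Where you diverge is in how these observations are converted into invariance. The paper simply notes the regularity of the right-hand side and its vanishing on the boundary, and then cites set-invariance theory (Blanchini, i.e.\ a Nagumo-type tangency condition) to conclude. You instead make the argument self-contained: you observe that each boundary face is itself an invariant line of the flow, and invoke uniqueness of solutions (Picard--Lindel\"of, with backward uniqueness) to forbid an interior trajectory from ever reaching it in finite time. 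Your route is more elementary in that it avoids quoting a set-invariance theorem, and it is also more complete in one respect the paper glosses over: you explicitly rule out finite escape time, using boundedness of $\phi$ on $[0,1]$ to get the a priori bound $g(t)\le g(0)e^{Mt}$, so that ``positively invariant for all $t\ge 0$'' is genuinely established rather than asserted on a maximal interval. One cosmetic caveat: since $\phi$ is only defined on $[0,1]$, your phrase ``$C^1$ on a neighborhood of $\Omega$'' implicitly requires extending $\phi$ beyond the endpoints (any $C^1$ extension will do, and the dynamics on $\Omega$ are unaffected); alternatively, work with one-sided Lipschitz estimates on the closed domain. This is routine and does not affect the validity of your argument.
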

\begin{proof}
The regularity of $\phi$ and the presence of the term $g$ on the right-hand side of the second equation of \eqref{eq:controlled_replicator} guarantees that $g(t)\in\mathbb R_{+}$ for all $t\in\mathbb R_{+}$. This observation, combined with the regularity of the right-hand side of the first equation of \eqref{eq:controlled_replicator} and the fact that it is equal to $0$ at the boundaries $x=0$ and $x=1$ yields the claim~\cite{blanchini1999set}.
\end{proof}

\subsection{Problem formulation}

Proposition~\ref{prop:convergence_uncontrolled} and Remark~\ref{rem:equilibria} shed light on important aspects of the replicator equation, relevant for  the equilibrium selection problem. In particular, they highlight different potential manifestations of the problem, depending on whether the objective is to steer the system to an equilibrium that i) is locally stable for the uncontrolled dynamics, ii) is an unstable consensus equilibrium, or iii) is a mixed strategy state, which need not be an equilibrium for the original dynamics. These different manifestations of the equilibrium selection problem share many similarities, but also have some key differences, calling for different treatments.

In the first scenario, the uncontrolled dynamical system has multiple equilibria that are locally ---but not globally--- stable and one may want to design a controller to steer the system to a desired equilibrium, regardless of the initial condition, making it (almost) globally asymptotically stable for the controlled system. In the following, we will refer to this as the \emph{consensus reaching} problem. This problem naturally arises in  coordination games (see Fig.~\ref{fig:2a}), where the 
 uncontrolled dynamics has two locally asymptotically stable equilibria ($x=0$ and $x=1$) and the initial condition determines which equilibrium is reached. In the context of social change (where action~$1$ and $2$ represent status quo and innovation, respectively), a key problem is to design incentive schemes to favor social change when innovators (players adopting action~$2$) are initially a small minority~\cite{bass1969model,ye2021nat}, i.e., when the initial condition is out of the basin of attraction of the desired equilibrium.

In the second scenario, instead, the desired consensus equilibrium is unstable for the uncontrolled dynamics. Hence, the objective of the controller becomes to enforce convergence to such desired equilibrium. We will refer to this as the \emph{consensus stabilization} problem. 
Consensus stabilization problems may arise in dominant-strategy and anti-coordination games, where the uncontrolled system has a unique globally asymptotically stable equilibrium (see Figs.~\ref{fig:2b} and~\ref{fig:2c}), but the controller wants to steer the system to a different (unstable) equilibrium, e.g., to promote cooperation in the prisoner's dilemma (Example~\ref{ex:pd})~\cite{May1981,Stella2022cooperation}, or reach coordination in anti-coordination games (Example~\ref{ex:min})~\cite{ramazi2016networks}. 

Finally, the third scenario consists of steering the system to a desired mixed strategy equilibrium, which in general is not an equilibrium of the uncontrolled dynamics. This problem is common in the control literature, in particular in process control. As such, we shall refer to this as the \emph{set-point regulation} problem. Applications of interest for such a problem are typical of many socio-technical systems, where incentives should be designed in order to prevent the entire population from adopting the same action, which may lead to congestion in infrastructure networks~\cite{Jiang2014,Como2022traffic} or reduce the resilience of financial networks~\cite{Acemoglu2015}. 

Besides guaranteeing convergence to the desired equilibrium, it also makes sense for many practical applications to enforce that the gain converges to a constant value. Here, a key difference between the three manifestations of the equilibrium selection problem arises. In fact,  for the consensus reaching problem, once the system is in the basin of attraction of the desired equilibrium, then no control is needed to guarantee convergence. On the contrary, in the consensus stabilization and set-point regulation problems, the controller should enforce convergence to an equilibrium that would otherwise be unstable and thus it should remain active at steady state, as proved by the following impossibility result.

\begin{lemma}\label{lemma:g}
If $x=\bar x$ is not a stable equilibrium of \eqref{eq:replicator} or the unstable mixed strategy NE of a coordination game, then it is impossible to guarantee convergence to $\bar x$ from all initial conditions in $(0,1)$ using an adaptive-gain controller of the form $(\mat G,\phi)$ such that $g(t)\to 0$. 
\end{lemma}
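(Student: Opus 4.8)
My plan is to argue by contradiction, exploiting that as $g(t)\to 0$ the controlled $x$-dynamics collapse onto the uncontrolled replicator, for which $\bar x$ is by hypothesis not locally attracting. I would first rewrite the first line of \eqref{eq:controlled_replicator} as $\dot x=f(x)+g\,h(x)$, with $f(x):=x(1-x)\big((a+d-b-c)x+b-d\big)$ the uncontrolled drift of \eqref{eq:replicator} and $h(x):=x(1-x)\big((G_{11}-G_{21})x+(G_{12}-G_{22})(1-x)\big)$. Since $h$ is a polynomial it is bounded on the invariant interval $[0,1]$ (Lemma~\ref{lemma:invariance}), say $|h|\le M$, so $|g(t)h(x(t))|\le Mg(t)\to 0$ and the $x$-subsystem is asymptotically autonomous with limit field $f$. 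The hypothesis on $\bar x$ means exactly that $\bar x$ is not a locally asymptotically stable rest point of $f$, and I would split the analysis on whether $f(\bar x)=0$.

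If $f(\bar x)\neq0$ (the set-point case, where $\bar x$ is not even an equilibrium of \eqref{eq:replicator}), a Barbalat-type step closes the argument immediately: assuming $x(t)\to\bar x$ gives $\dot x(t)=f(x(t))+g(t)h(x(t))\to f(\bar x)\neq0$ by continuity, so $\dot x$ is eventually of one sign and bounded away from $0$, forcing $x(t)$ to run off any neighborhood of $\bar x$ --- a contradiction. This case in fact fails from every interior initial condition, not merely from some.

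If instead $f(\bar x)=0$ but $\bar x$ is unstable (the consensus-stabilization case, and the mixed NE of a coordination game), I would turn instability into a fixed-distance repelling estimate. Since these equilibria are hyperbolic ($f'(\bar x)\neq0$ within each game class), there are $\delta,k>0$ and a repelling side on which $\mathrm{sign}(x-\bar x)\,f(x)\ge k|x-\bar x|$ for $|x-\bar x|\le\delta$ (two-sided at the interior source $x^*$, one-sided at an unstable consensus). Choosing $T$ with $Mg(t)\le k\delta/4$ for $t\ge T$, the drift dominates the vanishing perturbation on the annulus $\tfrac\delta2\le|x-\bar x|\le\delta$: there $\dot x$ has the repelling sign with magnitude at least $k\delta/4$, and at the outer edge $|x-\bar x|=\delta$ the barrier $|\dot x|\ge 3k\delta/4>0$ is crossed only outward. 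Hence any trajectory located in this annulus at time $T$ is trapped at distance $\ge\delta$ from $\bar x$ for all later times, so it cannot converge to $\bar x$.

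It then remains to exhibit an admissible initial condition producing such a trajectory, and this is where I would use the full strength of ``from all initial conditions''. Because $f$ and $h$ vanish at $x\in\{0,1\}$, the faces $x=0$ and $x=1$ are invariant, so the solutions from $x(0)=0$ and $x(0)=1$ are constant; by continuous dependence on initial data over the compact interval $[0,T]$ (the field is $C^1$ on $\Omega$ and $g$ stays finite there), the time-$T$ flow map $x(0)\mapsto x(T)$ is continuous with limits $0$ and $1$ at the endpoints of $(0,1)$, so by the intermediate value theorem its image is all of $(0,1)$. This yields some $x(0)\in(0,1)$ whose state at time $T$ lies in the repelling annulus on the chosen side, giving the required trajectory and contradicting convergence to $\bar x$. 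The one genuinely delicate point --- and the reason a naive linearization at $\bar x$ fails --- is the transient: $g(t)$ may be large for small $t$ and overwhelm the weak local repulsion in a shrinking neighborhood of $\bar x$. I expect to neutralize this precisely through the asymptotic-autonomy step, working on a fixed annulus at positive distance $\delta$ where, past time $T$, the non-vanishing drift $f$ dominates the vanishing perturbation $g\,h$ regardless of the earlier history of $g$.
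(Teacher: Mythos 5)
Your first case ($f(\bar x)\neq 0$) is correct and is essentially the paper's own argument for non-equilibrium set points: once $g(t)h(x(t))\to 0$, continuity forces $\dot x$ to be eventually one-signed and bounded away from zero near $\bar x$, so no trajectory with vanishing gain can converge there.

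The case $f(\bar x)=0$ contains a genuine gap, located exactly where you place the weight of the proof: selecting a trajectory that sits in the repelling annulus \emph{after} the gain has become small. Your time $T$ is defined by the decay of $g$ along one particular trajectory, but $g$ is a state variable ($\dot g=\phi(x)g$), so its decay is trajectory-dependent and nothing in the hypotheses makes it uniform over initial conditions. The trajectory produced by your intermediate-value argument is indeed in the annulus at time $T$, but its \emph{own} gain at time $T$ is uncontrolled: it may well exceed $k\delta/(4M)$ there, and by the later time at which its own gain finally satisfies that bound, its state may already lie inside the inner disk $|x-\bar x|<\delta/2$, where your barrier gives no information. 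No contradiction results. Moreover, this is not a repairable technicality in the mixed-NE subcase: there the controlled planar system \eqref{eq:controlled_replicator} has a hyperbolic saddle at $(x,g)=(\bar x,0)$, with Jacobian eigenvalues $f'(\bar x)=\frac{(a-c)(d-b)}{a+d-b-c}>0$ and $\phi(\bar x)<0$, and the one-dimensional stable manifold of this saddle consists of genuine trajectories along which $x(t)\to\bar x$ and $g(t)\to 0$ with $g(t)>0$. So any argument purporting to show that \emph{every} trajectory approaching $\bar x$ with small gain gets expelled is proving something false; the correct claim --- and the paper's proof of it --- is that the equilibrium is a saddle, hence not attracting, so convergence cannot hold from \emph{all} initial conditions even though it holds from some (stable manifold theorem). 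Finally, for the unstable-consensus subcase the lemma \emph{is} true trajectory-by-trajectory, but your bound $|h|\le M$ discards the structural fact that makes this provable: the control term in \eqref{eq:controlled_replicator} carries the same factor $x(1-x)$ as the drift, so near $x=0$ one has $|g\,h(x)|\le C g x$, and once $g$ is small the drift dominates on the entire punctured neighborhood $(0,\delta]$, not merely on an annulus; combined with Gronwall's inequality, which guarantees $x(t)>0$ up to the time the gain becomes small, this is how the paper rules out convergence without ever having to hunt for a special initial condition.
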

\begin{proof}
From Proposition~\ref{prop:convergence_uncontrolled}, we observe that in order for $x=\bar x$ to not be a stable equilibrium of \eqref{eq:replicator}, three cases are possible. Specifically,  $\bar x$  is either i) an unstable consensus equilibrium in a dominant strategy or an anti-coordination game, ii) an interior non-equilibrium point (i.e., any interior point excluding the stable mixed NE of an anti-coordination game and the unstable mixed NE of a coordination game), or {\color{black}iii) an unstable mixed NE of a coordination game. }
    
In case i), consider $\bar x=0$ when $b-d>0$. Since $b-d>0$, then, by continuity, there exist two positive constants $\delta,\gamma>0$ such that $(a+d-b-c)x+b-d>\gamma$, for all $x\in[0,\delta]$. The hypothesis that $\lim_{t\to\infty}g(t)=0$ implies that there exists $\bar t$ such that $g(t)\leq{\gamma}$ for all $t \geq \bar t$. As a consequence, we can bound $\dot x(t)\geq \frac12\gamma x(1-x)>0$ for any $t\geq\bar t$ and $x\in[0,\delta]$. Note that, regularity of $\phi(x)$ implies that exists a constant $\bar\gamma$ such that $g(t)\geq\bar\gamma$ for any $t\in[0,\bar t]$. Therefore, from \eqref{eq:controlled_replicator} we bound $\dot x\geq -(\max\{G_{12},G_{22}\}\bar\gamma-c-d)x$ for $t\in [0,\bar t$), which implies, due to Gronwall's inequality~\cite{Pachpatte_book}, that $x(\bar t)\geq x(0)\exp\{-(\max\{G_{12},G_{22}\}\bar\gamma-c-d)\bar t\}$ in the same interval. In other words, $x(t)$ cannot converge to $\bar x=0$ before time $\bar t$ and, for any $t\geq \bar t$,  $\dot x(t)$ is strictly positive in a neighborhood $[0,\delta]$ of $\bar x=0$. This latter property implies that $x(t)$ cannot converge to $0$ after $\bar t$. A similar argument can be used for $\bar x=1$ when $c-a>0$.
    
In case ii), $\bar x$ not being an equilibrium implies that $(1-x)x((a+d-b-c)\bar x+b-d)\neq 0$. Without loss of generality, let us assume that the left side of the inequality is strictly positive. By continuity, there exist $\delta,\gamma>0$ such that $(1-x)x((a+d-b-c)x+b-d)>\gamma$, for all $x\in[\bar x-\delta,\bar x+\delta]$. The same argument used in case i) guarantees that there exists $\bar t$ such that for any $x\in[\bar x-\delta,\bar x+\delta]$, $\dot x(t)>0$ for all $t\geq\bar t$. Hence, $\bar x$ cannot be an equilibrium.

In case iii), the Jacobian of the controlled replicator equation at the equilibrium point $(x,g)=(\frac{d-b}{a+d-b-c},0)$ is upper triangular matrix with eigenvalues equal to $\frac{(a-c)(d-b)}{a+d-b-c}>0$ and $\phi(\frac{d-b}{a+d-b-c})<0$. This implies that the desired equilibrium is a saddle point, and thus unstable. \end{proof}%

Consequently, while for the consensus reaching problem it is suitable to consider control policies that are only active in the transient, and $g(t)\to 0$; in the other two scenarios we need the control gain to remain active in the steady state, and $g(t)$ to converge to a finite non-zero constant. Hence, we  formulate three different control design problems in terms of the three different manifestations of the equilibrium selection problem described above.

\begin{problem}[Consensus reaching]\label{pr:1}
    Consider a game, where it is known that an equilibrium $x=\bar x$ is (locally) stable. Design $(\mat{G},\phi)$ so that i) the state $x(t)\to \bar x$ for (almost) any initial condition and ii) the gain $g(t)\to 0$.
\end{problem}
    
\begin{problem}[Consensus stabilization]\label{pr:2}
      Consider a game, where it is known that a consensus equilibrium $x=\bar x$ is unstable. Design $(\mat{G},\phi)$ so that i) the state $x(t)\to \bar x$ for (almost) any initial condition and ii) the gain $g(t)\to \bar g$, for some constant $\bar g>0$.
\end{problem}

\begin{problem}[Set-point regulation]\label{pr:3}
      Consider a game and let $x=\bar x\in(0,1)$. Design $(\mat{G},\phi)$ so that i) the state $x(t)\to \bar x$ for (almost) any initial condition and ii) the gain $g(t)\to \bar g$, for some constant $\bar g>0$.
\end{problem}

Considering the differences between these three problems, we address them separately in the next three sections. In particular, Section~\ref{sec:appraoch} focused on Problem~\ref{pr:1}, where we establish easy-to-verify sufficient conditions on the adaptation rate $\phi(x)$ to solve the problem for both the conformity gain and the innovation gain controllers (see Definition~\ref{ref:gain}). Then,  in Section~\ref{sec:dominant},  we prove that only the conformity gain controller can solve Problem~\ref{pr:2}, under some easy-to-verify conditions on the design of the adaptation rate. For these first two problems, without any loss in generality, we will focus on the scenario in which the desired consensus equilibrium is $\bar x=0$. Finally, in Section~\ref{sec:setpoint}, we explicitly design a class of adaptive-gain controllers to solve Problem~\ref{pr:3} for dominant-strategy and anti-coordination games.

\section{Consensus Reaching Problem}\label{sec:appraoch}

The consensus reaching problem arises in coordination games, where multiple equilibria are locally ---but not globally--- stable. Let us consider a generic coordination game, with payoff matrix from \eqref{eq:payoff_control} with $a>c$ and $d>b$. As a consequence of Proposition~\ref{prop:convergence_uncontrolled}, the trajectories of the uncontrolled dynamics will converge to $x=0$ if $x(0)<x^*=\frac{d-b}{a+d-b-c}$, and to $x=1$ if $x(0)>x^*$, with a saddle point equilibrium at $x^*$. Without loss of generality, we focus on the case in which the desired equilibrium is $\bar x=0$; the case $\bar x=1$ can be treated in a similar manner, as discussed at the end of this section.  Consequently, we make the following assumption.

\begin{assumption}\label{a:delta}
    The uncontrolled replicator equation \eqref{eq:replicator} is a coordination game that admits the (locally) asymptotically stable equilibrium $\bar x=0$. Moreover, the control designer knows the value of a constant $\delta>0$ such that $x(0)\in[0,\delta]\implies x(t)\to0$.
\end{assumption}

In other words, we make the reasonable assumption that a (possibly conservative) estimate of the domain of attraction of the desired equilibrium is known to the control designer. Note that such assumption is quite mild, since, with local information about the equilibrium available (namely, we know that the equilibrium is stable), one can assume that a neighborhood of attraction for the equilibrium can be easily estimated. In real-world scenarios, such a bound can be set using experimental data~\cite{centola2018experimental_tipping,ye2021nat}, or by making conservative assumptions on the model parameters. 

In the following, we show that Problem~\ref{pr:1} under Assumption~\ref{a:delta} can be solved using either a conformity gain controller or an innovation gain controller, under some general conditions on the shape of the adaptation rate. Clearly, in this setting, the (dynamic) incentives should favor action $2$ by increasing the entries in the second row of the payoff matrix. Hence, the conformity gain controller uses matrix $\mat{G^{(4)}}$ from \eqref{eq:matrices}, and the innovation gain controller uses 
 $\mat{ G^{(3)}}$. The opposite problem (i.e., enforcing convergence to $x=1$) can be solved in a symmetric fashion, as discussed at the end of this section. 

\subsection{Conformity gain controller}

We consider the conformity gain controller with ${\mat G}=\mat{ G^{(4)}}$. We demonstrate that, under some reasonable assumptions on the adaptation rate $\phi$, the controller is able to steer the controlled system to the desired equilibrium

\begin{theorem}\label{th:approach_coordination}\rm
An adaptive-gain control $(\mat{G^{(4)}},\phi)$ solves Problem~\ref{pr:1} under Assumption~\ref{a:delta} with desired consensus equilibrium $\bar x=0$ for any initial condition  $x(0)\in[0,1)$ if the adaptive function $\phi$ is such that:
\begin{enumerate}
    \item $\phi(x)<0$ for $x\in[0,\delta)$;
    \item $\phi(x)>0$ for $x\in(\delta,1]$; and
    \item there exists a constant $\varepsilon^*>0$ such that $\phi(x)>a-c$ for all $x\in[1-\varepsilon^*,1]$.
\end{enumerate}
\end{theorem}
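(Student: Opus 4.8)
The plan is to specialize the coupled system \eqref{eq:controlled_replicator} to $\mat{G}=\mat{G^{(4)}}$, which gives
\begin{equation*}
\dot x = x(1-x)\big((a+d-b-c)x+b-d-g(1-x)\big),\qquad \dot g=\phi(x)g ,
\end{equation*}
and then to track the state $x$ across three regimes. I would first record the structural facts I will lean on: by Lemma~\ref{lemma:invariance} trajectories remain in $\Omega$; since $\dot x=0$ at $x=1$, the set $\{x=1\}$ is invariant, so $x(0)<1$ forces $x(t)<1$ for all $t$; and the control contributes the nonpositive term $-g(1-x)$, so it can only decrease $x$. Because $\delta<x^*$ under Assumption~\ref{a:delta} (as $[0,\delta]$ lies in the basin $[0,x^*)$), the uncontrolled drift $(a+d-b-c)x+b-d$ is strictly negative on a strip $[\delta,\delta+\eta]$; together with $-g(1-x)\le 0$ this yields $\dot x\le -c_0<0$ there, uniformly in $g\ge 0$.

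For the easy regime $x(0)\in[0,\delta)$, the sign of $\dot x$ (negative on $(0,\delta)$, since $x<x^*$ and the control only helps) makes $x$ decrease monotonically to the only equilibrium it can reach, $x=0$; Condition~1 gives $\phi(x)<0$ there, so $\dot g=\phi(x)g$ drives $g\to 0$. The whole difficulty is therefore to show that any trajectory with $x(0)\in[\delta,1)$ is pushed below $\delta$ in finite time, after which this argument applies; note that $x$ can never re-cross $\delta$ upward, as $\dot x<0$ at $x=\delta$ for every $g\ge 0$.

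I would prove the latter by contradiction, assuming $x(t)\ge\delta+\eta$ for all $t$. On $[\delta+\eta,1)$ Conditions~2--3 give a uniform bound $\phi(x)\ge\psi>0$, hence $g(t)\to\infty$. Once $g$ is large, on the compact middle strip $[\delta+\eta,1-\varepsilon^*]$ the bracket is dominated by $-g(1-x)\le-g\,\varepsilon^*$, so $\dot x\le -c_1<0$ and $x$ is driven down across $\delta+\eta$ ---a contradiction--- unless the trajectory escapes into $[1-\varepsilon^*,1)$ and stays there. This is the main obstacle: as $x\to1$ the control authority $-g(1-x)$ degenerates, so a large gain no longer obviously helps and, a priori, $x$ could converge to the undesired consensus $x=1$.

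To rule this out I would introduce $P:=(1-x)g$. A direct computation gives
\begin{equation*}
\dot P = P\big(\phi(x)-x\big[(a+d-b-c)x+b-d\big]+xP\big),
\end{equation*}
and on $[1-\varepsilon^*,1)$ Condition~3 ($\phi(x)>a-c$), after collecting terms, forces $\phi(x)-x[(a+d-b-c)x+b-d]>0$, leaving $\dot P\ge xP^2\ge\tfrac12P^2$ once $\varepsilon^*\le\tfrac12$. Hence $P$ would blow up in finite time, which is impossible because $\phi$ is bounded on the compact $[0,1]$, so $g$ ---and thus $P\le g$--- grows at most exponentially. Therefore $x$ cannot remain in $[1-\varepsilon^*,1)$ on an infinite interval, contradicting the previous step; so $x$ falls below $\delta+\eta$, then crosses $\delta$ by the uniform strip estimate, and the basin argument finishes the proof with $x\to0$ and $g\to0$. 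I expect Condition~3 and this $P$-blow-up estimate to be the decisive and most delicate ingredient, whereas the drift bounds and the normalization $\varepsilon^*\le\tfrac12$ are routine.
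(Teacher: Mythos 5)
Your proposal is correct, and it reaches the paper's conclusion by a genuinely different route in the decisive step. The skeleton you share with the paper: below $\delta$ both the drift and the control push $x$ down, so convergence of $x$ (and then, via Condition~1, of $g$) is immediate; on any compact strip bounded away from $x=1$, Condition~2 makes $g$ grow exponentially and forces $\dot x<0$; the entire difficulty is ruling out convergence to $x=1$, where the control authority $-g(1-x)$ degenerates. The paper handles that last step with two \emph{separate} Gronwall estimates: Condition~3 plus compactness upgrades $\phi>a-c$ to $\phi\geq a-c+\sigma$ on $[1-\varepsilon^*,1]$, so $g$ grows at rate at least $a-c+\sigma$, while $y=1-x$ satisfies $\dot y\geq -(a-c)y$ and hence decays at rate at most $a-c$; the product $g(1-x)$ therefore grows like $e^{\sigma t}$ and eventually makes $\dot x<0$, contradicting $x(t)\to 1$. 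You instead track the product $P=(1-x)g$ directly: your expression for $\dot P$ is correct, and the pointwise bound $\phi(x)-x\big[(a+d-b-c)x+b-d\big]>0$ on $[1-\varepsilon^*,1]$ does follow from Condition~3, because $x\big[(a+d-b-c)x+b-d\big]\leq a-c$ on all of $[0,1]$ (with equality only at $x=1$); the resulting Riccati inequality $\dot P\geq \tfrac12 P^2$ yields finite-time blow-up, contradicting the at-most-exponential growth of $g\geq P$ that follows from boundedness of $\phi$ on $[0,1]$. Your version is arguably cleaner: it needs no compactness refinement (no $\sigma$), since the $xP$ term alone drives the contradiction, and shrinking $\varepsilon^*$ to at most $1/2$ is a legitimate without-loss-of-generality step because Condition~3 is inherited by smaller $\varepsilon^*$; your contradiction hypothesis $x(t)\geq\delta+\eta$ for all $t$ is also the right negation, thanks to your uniform strip estimate at $x=\delta$. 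What the paper's two-estimate argument buys in exchange is quantitative information --- explicit exponential rates for $g$ and $1-x$, hence an estimate of how fast trajectories are expelled from a neighborhood of $x=1$ --- which your pure blow-up contradiction does not provide.
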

\begin{proof}
For the sake of readability, we define two positive constants $\alpha=a-c>0$ and $\beta=d-b>0$. Using this notation, 
and adopting ${\mat G}=\mat{ G^{(4)}}$ from \eqref{eq:matrices}, the controlled replicator equation in \eqref{eq:controlled_replicator} reduces to the following:
\begin{equation}\label{eq:controlled_approach_coordination}
    \begin{alignedat}{1}
            \dot x&=x\big(1-x\big)\big((\alpha+\beta)x-\beta-g(1-x)\big)\\
            \dot g&=g\phi(x).
    \end{alignedat}
\end{equation}
First, we verify that  \eqref{eq:controlled_approach_coordination} has the following equilibria:
    \begin{enumerate}
   \item $(x,g)=(0,0)$, which is (locally) exponentially stable;
       \item $(x,g)=(1,0)$, which is an (unstable) saddle point;
        \item $(x,g)=(\frac{\beta}{\alpha+\beta},0)$, which is an (unstable) source.
    \end{enumerate}
  The stability properties of the equilibria can be established from computing the Jacobian of \eqref{eq:controlled_approach_coordination}, which is equal to
  \begin{equation*}
   \begin{bmatrix}
-3x^2(\alpha+\beta+g)\hspace{-.05cm}+\hspace{-.05cm}2x(\alpha+2\beta+g)\hspace{-.05cm}-\hspace{-.05cm}\beta\hspace{-.05cm}-\hspace{-.05cm}g&-x(1-x)^2\\
     g\phi'(x)&\phi(x)
    \end{bmatrix}\hspace{-.1cm}.
\end{equation*}
In fact, at $(0,0)$, we get two negative eigenvalues $-\beta$ and $\phi(0)<0$; at $(1,0)$ we get the eigenvalues $-\alpha-\beta$ and $\phi(1)>0$; at $(\frac{\beta}{\alpha+\beta},0)$, we get the eigenvalues  $\frac{\alpha\beta}{\alpha+\beta}>0$ and $\phi(\frac{\beta}{\alpha+\beta})>0$, where the latter holds true because $\frac{\beta}{\alpha+\beta}$ is the unstable equilibrium of the uncontrolled dynamics. 
Assumption~\ref{a:delta} implies that $\frac{\beta}{\alpha+\beta}>\delta$ and, due to condition~2 of the theorem, $\phi(\frac{\beta}{\alpha+\beta})>0$.

We now analyze \eqref{eq:controlled_approach_coordination}, to prove convergence to $(0,0)$ from any initial condition with $g(0)>0$ and $x(0)<1$.  We divide the positively invariant domain $\Omega$ (see Lemma~\ref{lemma:invariance}) into three regions:
\begin{equation*}
    \mathcal A=\Big[0,\frac{\beta}{\alpha+\beta}\Big)\times\mathbb R_{+}, \,\mathcal B=\Big[\frac{\beta}{\alpha+\beta},1\Big)\times\mathbb R_{+}, \,\mathcal C=\{1\}\times\mathbb R_{+}.
\end{equation*}
Evidently, $\Omega=\mathcal A\cup\mathcal B\cup\mathcal C$. In the following, we prove that any trajectory with initial conditions in $\mathcal A\cup \mathcal B$ converges to $(0,0)$. The proof is structured in three steps: 1) we prove that any trajectory with initial conditions in $\mathcal A$ converges to $(0,0)$; 2) we rule out the existence of trajectories with $\omega$-set in the region $\mathcal B$; and 3) we prove that no trajectory starting from $\mathcal B$ can converge to $\mathcal C$, which yields the conclusion that all trajectories starting from $\mathcal A\cup \mathcal B$ necessarily converge to $(0,0)$.

In the first step, we observe that the first equation of the vector field in \eqref{eq:controlled_approach_coordination} is uniformly bounded for $(x,g)\in \mathcal A$ by
\begin{equation}
            \dot x\leq x(1-x)((\alpha+\beta)x-\beta),
\end{equation}
which is independent of $g$ and its right-hand side is strictly negative for $(x,g)\in \mathcal{A}$, guaranteeing that $x(t)\to 0$ for any initial condition in $\mathcal A$. As a consequence, there exists a time $\bar t\geq 0$ such that $x(t)<\delta$ for any $t\geq \bar t$. Due to condition~1) of the theorem, this implies that $\dot g$ is eventually strictly negative, and thus $g(t)\to 0$. Hence, the basin of attraction of $(0,0)$ includes $\mathcal A$. This completes the first step of the proof.

In the second step, we focus on the behavior of \eqref{eq:controlled_approach_coordination} in $\mathcal B$. First, define $\mu:=\min_{x\in[\frac{\beta}{\alpha+\beta},1]}\phi(x)$. As observed in the above, $x^*=\frac{\beta}{\alpha+\beta}$ is the unstable equilibrium of the uncontrolled dynamics. 
Assumption~\ref{a:delta} implies that $\frac{\beta}{\alpha+\beta}>\delta$ and, consequently, $\mu>0$. This observation allows us to bound the second equation in \eqref{eq:controlled_approach_coordination} uniformly with respect to $x$ in $\mathcal B$ as $\dot g\geq \mu g$. Using Gronwall's inequality~\cite{Pachpatte_book}, we conclude that $g(t)\geq g(0)\exp\{\mu t\}$, for any $t\geq 0$ provided that $x(s)\geq\frac{\beta}{\alpha+\beta}$ for all $s\in[0,t]$. An important consequence is that for any constant $\varepsilon>0$, we can bound
\begin{equation}\label{eq:bound}
    \dot x=x(1-x)((\alpha+\beta)x-\beta-g(1-x))\hspace{-.05cm}\leq \hspace{-.05cm}x(1-x)(\alpha-\varepsilon g(0)e^{\mu t}),
\end{equation}
for any $x\in[\frac{\beta}{\alpha+\beta},1-\varepsilon]$. Observe that the right side of \eqref{eq:bound}, and by implication $\dot x$, is strictly negative if $\alpha < \varepsilon g(0) e^{\mu t}$. Evidently, $\dot x < 0$ for all 
\begin{equation}
    t\geq\tau_\varepsilon:=\frac1\mu\ln(\alpha)-\frac1\mu\ln(\varepsilon)-\frac1\mu\ln(g(0)).
\end{equation} 

Based on these observations, we complete the second step of the proof by ruling out trajectories that have $\omega$-set in the region $\mathcal B$. To do so, we first show that $x(t)$ either goes below $\frac{\beta}{\alpha+\beta}$ (thus reaching $\mathcal A$, where we know that the desired equilibrium is attractive), or it converges to $1$ (reaching $\mathcal C$). The latter scenario will be ruled out in the third step. 

Let us consider a generic trajectory starting in $\mathcal B$ that does not converge to $\mathcal C$, i.e., does not converge to $x=1$. This implies the existence of a constant $\varepsilon>0$ such that for any $\tau>0$, there exists $t_1\geq\tau$ with $x(t_1)< 1-\varepsilon$. If we let $\tau=\tau_\varepsilon$ computed above, we conclude that at time $t_1$ we have $\dot x(t_1)<0$. For an infinitesimally small time increment $\Delta T>0$, we have $x(t_1+\Delta T) < x(t_1)$ and so there holds $\dot x(t_1+\Delta T)<0$, being that $x(t_1+\Delta T)\in[\frac{\beta}{\alpha+\beta},1-\varepsilon]$ and $t_1+\Delta T>\tau_\varepsilon$. By continuity, it follows that the trajectory is such that $x(t)$ necessarily decreases below $\frac{\beta}{\alpha+\beta}$, thereby reaching $\mathcal A$. This completes the second step of the proof.

In the third step, we rule out convergence to $\mathcal C$. Let us assume that there exists a trajectory of \eqref{eq:controlled_approach_coordination} with initial condition in $\mathcal B$ that converges to $x=1$. Hence, for any constant $\varepsilon> 0$ there exists $\bar t$ such that $x(t)> 1-\varepsilon$, for any $t>\bar t_\varepsilon$. Let us consider $\varepsilon=\varepsilon^*$ from the third condition of the theorem on $\phi(x)$. This implies that, for any $t>\bar t_\varepsilon$ it holds $\phi(x)>\alpha$. Moreover, since $\phi(x)>\alpha$ for all $x$ in the compact domain $[1-\varepsilon^*,1]$, we can refine the bound as $\phi(x)\geq\alpha+\sigma$, for some constant $\sigma>0$. Hence, similar to the above, we can bound the second equation of \eqref{eq:controlled_approach_coordination} as $\dot g\geq (\alpha+\sigma) g$ for any $t\geq \bar t_\varepsilon$. As a consequence, for $t\geq\bar t_\varepsilon$, there holds 
\begin{equation}\label{eq:bound1}
    g(t)\geq g(\bar t)e^{(\alpha+\sigma)(t-\bar t_\varepsilon)}\geq g(0)e^{(\alpha+\sigma)(t-\bar t_\varepsilon)},
\end{equation}
being $\dot g>0$ in $\mathcal B\cup\mathcal C$. On the other hand, if we define $y = 1-x$, and notice that $\dot y = -\dot x$, it follows from \eqref{eq:controlled_approach_coordination} that
\begin{equation}\begin{alignedat}{1}
    \dot y   &= -(1-y)y(\alpha-(\alpha+\beta)y-gy)\\&= -\alpha y + y^2(2\alpha+\beta+g) - y^3 (\alpha+\beta+g)\geq -\alpha y. 
\end{alignedat}\end{equation}
Hence Gronwall's inequality yields that 
\begin{equation}\label{eq:bound2}
    1-x(t)\geq (1-x(0))e^{-\alpha t}.
\end{equation} To conclude, by inserting the two bounds in \eqref{eq:bound1} and \eqref{eq:bound2} into the first equation of \eqref{eq:controlled_approach_coordination}, we obtain
\begin{equation}
    \dot x  \leq  (1-x)x(\alpha-(1-x(0))g(0)e^{-(\alpha+\sigma)\bar t_\varepsilon}e^{\sigma t}).
\end{equation}
Using an argument similar to the one used in \eqref{eq:bound}, the fact that the last term increases exponentially fast guarantees that there exists a time $\tau_1$ such that $\dot x(\tau_1) < 0$ and, by continuity, $\dot x(t) < 0$ for all $t\geq \tau_1$ such that $x(t)\geq \delta$; ultimately, there will be a time $\tau_2 > \tau_1$ such that $x(\tau_2) < 1-\varepsilon$. This contradicts the fact that $x(t)\to 1$, yielding the claim that no trajectory can converge to $\mathcal C$, thereby completing the proof.
\end{proof}
 
Theorem~\ref{th:approach_coordination} poses only a few mild constraints on the shape of the adaptation rate $\phi$, as we now discuss.
   
\begin{remark}\label{rem:adaptation}
 The first two constraints on $\phi$ in Theorem~\ref{th:approach_coordination} require only local information on the desired equilibrium ---see Assumption~\ref{a:delta} and the related discussion. Several classes of functions can be defined to meet these two conditions, e.g., 
    \begin{equation}\label{eq:affine}
        \phi(x)=p(x^q-\delta^q),
    \end{equation}
    with positive parameters $p>0$ and $q>0$ that regulate the velocity of the adaptation process and its reactivity, respectively. This class of functions encompasses the affine functions used in~\cite{zino2023_adaptive} (case $q=1$). More generally, one can use compositions of \eqref{eq:affine} with functions that preserve the sign.
    \end{remark}
\begin{remark}\label{rem:adaptation2}
Condition 3) in Theorem~\ref{th:approach_coordination} requires some information on the order of magnitude of the payoff function. However, it is worth noticing that one does not need full knowledge of the payoff matrix in order to guarantee that Condition 3) is satisfied. In fact, one only needs an estimate of the difference between the entries $a-c$ or, conservatively, just an upper bound on the maximum component of the payoff matrix, which can be ultimately estimated online, from observations of the velocity of the dynamics. For instance, when using \eqref{eq:affine}, it suffices to guarantee that $p$ is  larger than (an estimate of) the maximum entry of  $\mat A$.
\end{remark}




\subsection{Innovation gain controller}

Now, we consider the innovation gain controller (${\mat G}=\mat{ G^{(3)}}$), which  represents providing an additional (temporary and adaptive) payoff to early adopters. In the following, we prove that this second controller is also able to solve Problem~\ref{pr:1}, as demonstrated in the following result.

\begin{theorem}\label{th:approach_innovation}\rm
The adaptive-gain control $(\mat{G^{(3)}},\phi)$ solves Problem~\ref{pr:1} under Assumption~\ref{a:delta} with desired consensus equilibrium $\bar x=0$ for any initial condition  $x(0)\in[0,1)$ if the adaptive function $\phi$ is such that:
\begin{enumerate}
    \item $\phi(x)<0$ for $x\in[0,\delta)$; and
    \item $\phi(x)>0$ for $x\in(\delta,1]$.
\end{enumerate}
\end{theorem}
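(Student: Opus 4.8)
The plan is to mirror the structure of the proof of Theorem~\ref{th:approach_coordination}, exploiting the fact that the innovation controller $\mat{G^{(3)}}$ yields a structurally similar system that is actually \emph{easier} to analyze near $x=1$. Adopting the shorthand $\alpha=a-c>0$ and $\beta=d-b>0$ and substituting $\mat G=\mat{G^{(3)}}$ (so that $G_{11}-G_{21}=-1$ and $G_{12}-G_{22}=0$) into \eqref{eq:controlled_replicator}, the controlled dynamics reduce to
\begin{equation*}
\dot x=x(1-x)\big((\alpha+\beta)x-\beta-gx\big),\qquad \dot g=g\phi(x).
\end{equation*}
I would first verify that on $g=0$ this system has exactly the three equilibria of Theorem~\ref{th:approach_coordination}, namely $(0,0)$, $(1,0)$ and $(\tfrac{\beta}{\alpha+\beta},0)$ --- and that no interior equilibrium with $\phi(x)=0$ can lie in $\Omega$, since $x=\delta<\tfrac{\beta}{\alpha+\beta}$ would force $g<0$. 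Classifying them via the Jacobian gives the same picture: $(0,0)$ is exponentially stable (eigenvalues $-\beta$ and $\phi(0)<0$), $(1,0)$ is a saddle (eigenvalues $-\alpha$ and $\phi(1)>0$), and $(\tfrac{\beta}{\alpha+\beta},0)$ is a source (eigenvalues $\tfrac{\alpha\beta}{\alpha+\beta}>0$ and $\phi(\tfrac{\beta}{\alpha+\beta})>0$, the latter by condition~2 together with $\tfrac{\beta}{\alpha+\beta}>\delta$ from Assumption~\ref{a:delta}).

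Next I would reuse the partition of the invariant domain $\Omega$ (Lemma~\ref{lemma:invariance}) into $\mathcal A=[0,\tfrac{\beta}{\alpha+\beta})\times\mathbb R_+$, $\mathcal B=[\tfrac{\beta}{\alpha+\beta},1)\times\mathbb R_+$ and $\mathcal C=\{1\}\times\mathbb R_+$. The treatment of $\mathcal A$ is identical to Theorem~\ref{th:approach_coordination}: since $g\geq0$ the term $-gx$ only renders $\dot x$ more negative, so $\dot x\leq x(1-x)((\alpha+\beta)x-\beta)$, whose right-hand side is strictly negative on the interior of $\mathcal A$. Hence any trajectory in $\mathcal A$ has $x(t)\to0$; once $x<\delta$, condition~1 forces $\dot g<0$ and thus $g(t)\to0$, so $(0,0)$ attracts all of $\mathcal A$. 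Moreover, on the line $x=\tfrac{\beta}{\alpha+\beta}$ one has $\dot x=-gx^2(1-x)<0$ whenever $g>0$, so $\mathcal A$ is forward invariant: a trajectory can only ever enter it, never leave.

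The crux --- and the point where this proof diverges from and simplifies that of Theorem~\ref{th:approach_coordination} --- is the analysis of $\mathcal B$. Setting $\mu:=\min_{x\in[\beta/(\alpha+\beta),1]}\phi(x)$, condition~2 with $\tfrac{\beta}{\alpha+\beta}>\delta$ gives $\mu>0$, so $\dot g\geq\mu g$ while the trajectory stays in $\mathcal B$, and Gronwall's inequality~\cite{Pachpatte_book} yields $g(t)\geq g(0)e^{\mu t}$ up to the exit time. The decisive observation is that, unlike the conformity controller whose control term $g(1-x)$ vanishes as $x\to1$, the relevant term here is $gx$, which on $\mathcal B$ satisfies $gx\geq\tfrac{\beta}{\alpha+\beta}g$ \emph{uniformly}. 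Combined with $(\alpha+\beta)x-\beta\leq\alpha$, this produces the bound
\begin{equation*}
\dot x\leq x(1-x)\Big(\alpha-\tfrac{\beta}{\alpha+\beta}\,g(0)e^{\mu t}\Big),
\end{equation*}
valid on all of $\mathcal B$, including near $x=1$. The parenthesized factor is negative for every $t\geq\tau:=\tfrac1\mu\ln\tfrac{\alpha(\alpha+\beta)}{\beta g(0)}$, so $\dot x<0$ throughout $\mathcal B$ for $t\geq\tau$. This single estimate simultaneously rules out convergence to $\mathcal C$ (since $x$ cannot approach $1$ while $\dot x$ is uniformly negative there) and, because $\dot x\to-\infty$ on any putative limit in $[\tfrac{\beta}{\alpha+\beta},1)$, forces $x$ to drop below $\tfrac{\beta}{\alpha+\beta}$ in finite time and enter the trap $\mathcal A$. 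Thus every trajectory with $x(0)\in[0,1)$ and $g(0)>0$ ends up in $\mathcal A$ and converges to $(0,0)$, which establishes both $x(t)\to\bar x=0$ and $g(t)\to0$ and so solves Problem~\ref{pr:1}.

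I expect the only real bookkeeping to be confirming that $g(t)\geq g(0)e^{\mu t}$ persists up to the finite exit time from $\mathcal B$ and that the blow-up of $\dot x$ on $[\tfrac{\beta}{\alpha+\beta},1)$ genuinely forces a crossing into $\mathcal A$. Notably, the hardest part of Theorem~\ref{th:approach_coordination} --- the delicate two-sided exponential estimate of its third step, which needed the extra growth condition $\phi(x)>a-c$ near $x=1$ to overcome the \emph{vanishing} control term $g(1-x)$ --- disappears entirely here, precisely because the non-vanishing term $gx$ makes $\dot x$ uniformly negative on all of $\mathcal B$. This is exactly why Theorem~\ref{th:approach_innovation} can dispense with condition~3.
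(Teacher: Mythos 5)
Your proposal is correct and follows essentially the same route as the paper's proof: the same reduction to the planar system, the same partition $\Omega=\mathcal A\cup\mathcal B\cup\mathcal C$, the same Gronwall bound $g(t)\geq g(0)e^{\mu t}$ on $\mathcal B$, and the identical key estimate $\dot x\leq x(1-x)\big(\alpha-\tfrac{\beta}{\alpha+\beta}g(0)e^{\mu t}\big)$ with the same threshold time, exploiting exactly the paper's observation that the non-vanishing term $gx$ makes the third step of Theorem~\ref{th:approach_coordination} (and hence condition~3) unnecessary. Your added details --- the forward invariance of $\mathcal A$ via $\dot x<0$ on the line $x=\tfrac{\beta}{\alpha+\beta}$, and the explicit contradiction argument forcing exit from $\mathcal B$ in finite time --- are sound elaborations of steps the paper leaves implicit.
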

\begin{proof}
Let $\alpha=a-c>0$ and $\beta=d-b>0$. Using~\eqref{eq:matrices}, \eqref{eq:controlled_replicator} reduces to the planar system
\begin{equation}\label{eq:controlled_approach_innovation}
    \begin{alignedat}{1}
            \dot x&=x(1-x)((\alpha+\beta)x-\beta-gx)\\
            \dot g&=g\phi(x),
    \end{alignedat}
\end{equation}
which has the following equilibria:
    \begin{enumerate}
        \item $(x,g)=(0,0)$, which is (locally) asymptotically stable;
        \item $(x,g)=(1,0)$, which is an (unstable) saddle point;
        \item $(x,g)=(\frac{\beta}{\alpha+\beta},0)$, which is an (unstable) source,
    \end{enumerate}
    where the stability is determined from the Jacobian matrix.
    
The convergence analysis follows the proof of Theorem~\ref{th:approach_coordination}, but it is slightly simpler. We thus omit providing the full details but instead focus on the key differences. First, we divide the domain $\Omega$ of \eqref{eq:controlled_approach_innovation} into $\mathcal A=[0,\frac{\beta}{\alpha+\beta})\times\mathbb R_{+}$, $\mathcal B=[\frac{\beta}{\alpha+\beta},1)\times\mathbb R_{+}$, and $\mathcal C=\{1\}\times\mathbb R_{+}$, with $\Omega=\mathcal A\cup\mathcal B\cup\mathcal C$. In $\mathcal A$, the same argument used in Theorem~\ref{th:approach_coordination} guarantees convergence of \eqref{eq:controlled_approach_innovation} to $(0,0)$. In $\mathcal B$, we use Gronwall's inequality to derive a bound on $g(t)$ of the form $g(t)\geq g(0)e^{\mu t}$, with $\mu=\min_{x\in[\frac{\beta}{\alpha+\beta},1]}\phi(x)>0$. However, the analysis of the system in $\mathcal B$ is simpler, since we do not need to rule out convergence to $\mathcal C$. In fact, the bound $g(t)\geq g(0)e^{\mu t}$ can now be directly inserted into the first equation of \eqref{eq:controlled_approach_innovation}, yielding
\begin{equation}\begin{alignedat}{1}
        \dot x(t)&\leq x(t)(1-x(t))\left((\alpha+\beta)x(t)\hspace{-.05cm}-\hspace{-.05cm}\beta\hspace{-.05cm}-\hspace{-.05cm}x(t)g(0)e^{\mu t}\right)\\&\leq x(t)(1-x(t))\Big(\alpha-\frac{\beta}{\alpha+\beta}g(0)e^{\mu t}\Big)
\end{alignedat}
\end{equation}
which holds for any $x\in[\frac{\beta}{\alpha+\beta},1)$ and is strictly negative for \begin{equation}
    t\geq\bar t=\frac{1}{\mu}\ln(\alpha)+\frac{1}{\mu}\ln(\alpha+\beta)-\frac{1}{\mu}\ln(\beta)-\frac{1}{\mu}\ln (g(0)).
\end{equation}
This guarantees that $x(t)$ is monotonically decreasing for $t\geq \bar t$, until it reaches $\mathcal A$, yielding the claim.
\end{proof}

Theorem~\ref{th:approach_innovation} guarantees that one can solve Problem~\ref{pr:1} using the  innovation gain controller, with an adaptation rate that satisfies the mild constraints discussed in Remark~\ref{rem:adaptation}. However, in this case, no additional constraints regarding the magnitude of the adaptation rate close to $x=1$ are needed, unlike the conformity gain controller, where the additional condition discussed in Remark~\ref{rem:adaptation2} is needed. 


\subsection{Comparison between the two control approaches}


\begin{figure}
    \centering
\subfloat[Conformity gain]{\includegraphics[]{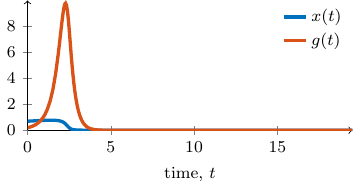}\label{fig:3a}}\quad\subfloat[Innovation gain]{\includegraphics[]{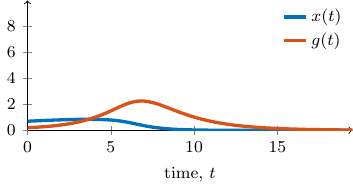}\label{fig:3b}}
    \caption{Trajectories of the controlled pure coordination game with $a=d=1$ and $x(0)=0.7$ using (a) conformity gain control ($q=1$ and $p=7$) and (b) innovation gain control ($q=p=1$). \label{fig:3}}
\end{figure}

We conclude this section by comparing the two approaches via numerical simulations. Figure~\ref{fig:3} shows two exemplifying trajectories of the controlled replicator equation. In both cases, we consider a pure coordination game (Example~\ref{ex:pure}) with $a=d=1$ (and $b=c=0$) and initial condition $x(0)=0.7$. Being $x(0)>x^*=1/2$, the uncontrolled system converges to  $1$. For both scenarios, we use an affine adaptive function (with $q=1$ in \eqref{eq:affine}), which always satisfies conditions 1) and 2) of Theorems~\ref{th:approach_coordination} and~\ref{th:approach_innovation}, and we set $\delta=0.4$. In Fig.~\ref{fig:3a}, we use conformity gain ($p=7$, which satisfies condition 3) of Theorem~\ref{th:approach_coordination}); in Fig.~\ref{fig:3b}, we use innovation gain ($p=1$). 

From the simulations in Fig.~\ref{fig:3}, we observe that both our controllers successfully steer the system to the desired equilibrium $\bar x=0$. These simulations give us also the opportunity to discuss some of the pros and cons of the two  approaches. Specifically, we observe that the conformity gain control appears to yield faster convergence. This is due to the fact that the gain is multiplied by $1-x(t)$ instead of by $x(t)$ in the $\dot x$ equation. Therefore, its effect does not vanish as $x(t)$ approaches $0$. The drawback, however, is that larger peaks in the gain are reached when the system is far from the desired equilibrium, as can be observed in Fig.~\ref{fig:3a}. 

In view of these observations, a  balanced control strategy can be designed by combining the two approaches, whereby innovation gain ($\mat{G^{(3)}}$) is used in the first stage (i.e., when the system is far from the desired equilibrium), and then the controller switches to a conformity gain ($\mat{G^{(4)}}$). From a practical point of view, such a combined approach may entail a first stage in which policies should incentivize innovators and early adopters, followed by a second stage where instead the collective adoption of the desired action is promoted.

Besides integrating the two different approaches, the constraints in our theoretical results leave many other degrees of freedom in designing the adaptation rate $\phi$. Hence, the optimal design of such a function is of paramount importance toward improving the performance of our controller. To investigate this issue, we perform a set of simulations, in which we numerically integrate \eqref{eq:controlled_replicator} over a time horizon $T$, using the innovation gain control and different adaptation rates in the form of \eqref{eq:affine}, exploring how the parameters $p$ and $q$ affect the total control effort $J_g:=\int_0^{T} g(t)dt$ and the maximum gain  $g_{\text{max}}:=\max_{t\in[0,T]} g(t)$. Our results, reported in Fig.~\ref{fig:4}, depicts an interesting scenario. In fact, while the total control effort displays a monotonic behavior, whereby it grows as $q$ increase and $p$ decreases, the peak gain shows a more complex, non-monotonic behavior. Specifically, $g_{\text{max}}$ seems to be particularly large when $p$ is large and $q$ has moderate values, while its minimum is achieved in an intermediate region for the parameter $p$, which depends on $q$. Apparently, an optimal trade-off can be obtained by setting moderate levels of $p$ and small values of $q$, similar to \eqref{fig:3b}. Alternatively, for larger values of $q$, the plot in Fig.~\ref{fig:4b} seems to describe a linear relation between the parameters $q$ and $p$ that minimizes the maximum peak gain, whereby the optimal value of $p$ (initially small) increases linearly with $q$.

\begin{figure}
    \centering
\subfloat[Total control effort, $J_{g}$]{\includegraphics[]{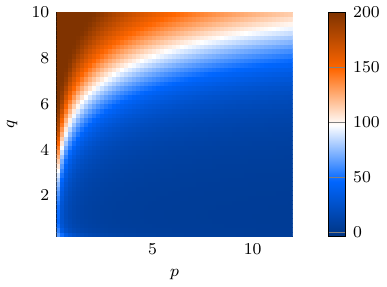}\label{fig:4a}}\quad\subfloat[Peak gain, $g_{\text{max}}$]{\includegraphics[]{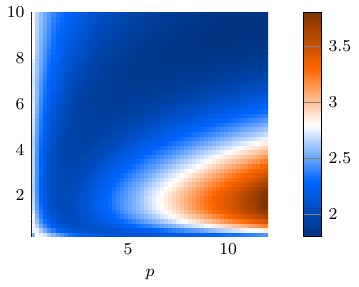}\label{fig:4b}}
    \caption{Total control effort and peak gain of the innovation gain control for a pure coordination game ($a=d=1$) using adaptation rate \eqref{eq:affine} with different values of $p$ and $q$. \label{fig:4}}
\end{figure}

\begin{remark}\label{rem:other_eq1}
    In this section, we focused on the scenario in which the desired consensus is $\bar x=0$. Similarly, an adaptive-gain controller can be designed to steer the system to the other locally stable equilibrium of a coordination game, i.e., $\bar x=1$. Specifically, one can use an innovation gain controller with $\mat{ G^{(2)}}$, imposing $\phi(x)>0$ for $x\in[0,1-\delta)$ and $\phi(x)<0$ for $x\in(1-\delta,1]$; or a conformity gain controller with $\mat{ G^{(1)}}$ and imposing the additional condition that there exists a constant $\varepsilon^*>0$ such that $\phi(x)>a-c$ for all $x\in[0,\varepsilon^*]$.
\end{remark}

\section{Consensus Stabilization Problem}\label{sec:dominant}

The consensus stabilization problem described in Problem~\ref{pr:2} arises whenever the desired consensus equilibrium is unstable for the uncontrolled dynamics. Without loss of generality, we focus on the case in which the consensus equilibrium $\bar x=0$ is unstable, and our objective is to steer the system to it. The opposite scenario, in which the desired equilibrium is $\bar x=1$ and it is unstable, can be treated similarly (see a remark at the end of this section).

\begin{assumption}\label{a:consensus}
    The uncontrolled replicator equation \eqref{eq:replicator} admits the unstable equilibrium $\bar x=0$. 
\end{assumption}

We start by presenting an impossibility result for the innovation gain controller (control matrix $\mat{G^{(3)}}$). Then, we focus on the conformity gain controller ($\mat{G^{(4)}}$), proving that such a controller is  able to solve Problem~\ref{pr:2}. 

\begin{proposition}\label{prop:impossible}\rm
    No innovation gain controller $(\mat{G^{(3)}},\phi)$ can solve  Problem~\ref{pr:2} under Assumption~\ref{a:consensus} with desired consensus equilibrium $\bar x=0$.
\end{proposition}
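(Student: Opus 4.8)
The plan is to show that, with the control matrix $\mat{G^{(3)}}$, the control action is structurally unable to counteract the instability of $\bar x=0$ because it enters the $\dot x$-equation with a factor that vanishes at $x=0$. First I would record the consequence of Assumption~\ref{a:consensus}: linearizing the uncontrolled replicator equation \eqref{eq:replicator} at $x=0$ gives the eigenvalue $b-d$, so instability of $\bar x=0$ is equivalent to $b-d>0$. Substituting $\mat{G^{(3)}}$ (for which $G_{11}-G_{21}=-1$ and $G_{12}-G_{22}=0$) into \eqref{eq:controlled_replicator} yields
\begin{equation*}
\dot x = x(1-x)\big((a+d-b-c)x+(b-d)-gx\big), \qquad \dot g=\phi(x)g .
\end{equation*}
The crucial observation is that the gain multiplies $x$ inside the bracket, so the whole control contribution to $\dot x$ is $-gx^2(1-x)$, which is $O(x^2)$, whereas the destabilizing term is the linear $(b-d)x>0$.

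Next I would argue by contradiction. Suppose $(\mat{G^{(3)}},\phi)$ solves Problem~\ref{pr:2}, so that for an interior initial condition $x(0)\in(0,1)$ we have $x(t)\to 0$ and $g(t)\to\bar g$ for some finite $\bar g>0$. Since $g$ is continuous and convergent, it is bounded, say $g(t)\le G_{\max}<\infty$ for all $t\ge 0$; here the finiteness of $\bar g$ required by Problem~\ref{pr:2} is exactly what lets me bound the control term uniformly. Using $g\le G_{\max}$, the bracket satisfies $(a+d-b-c)x+(b-d)-gx \ge (b-d) - (|a+d-b-c|+G_{\max})x$, so choosing $\delta:=\min\{1/2,\ (b-d)/(2(|a+d-b-c|+G_{\max}))\}$ guarantees that the bracket is at least $(b-d)/2>0$ for all $x\in(0,\delta]$. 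Consequently $\dot x \ge \tfrac12 (b-d)\,x(1-x)>0$ whenever $x\in(0,\delta]$.

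Finally I would close the argument. Because $x=0$ and $x=1$ are equilibria of the first equation, the open interval $(0,1)$ is invariant, so $x(t)\in(0,1)$ for all $t$. If $x(t)\to 0$, then $x(t)$ must eventually enter $(0,\delta]$; but there $\dot x>0$, so $x(t)$ is strictly increasing (hence bounded below by its entry value while it stays in the region, and above $\delta$ otherwise), contradicting $x(t)\to 0$. This rules out convergence to $\bar x=0$ from every interior initial condition, and in particular from almost every one, so no such controller can solve Problem~\ref{pr:2}. I expect the only delicate point to be the uniform lower bound on $\dot x$: it works precisely because Problem~\ref{pr:2} demands a finite limit $\bar g$, which bounds $g$ and hence confines the control term $-gx$ to a size that the linear instability $(b-d)x$ dominates near $x=0$. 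This is the structural contrast with the conformity gain $\mat{G^{(4)}}$, whose contribution enters as $-g(1-x)$ and does not vanish at $x=0$, explaining why that controller can succeed where $\mat{G^{(3)}}$ cannot.
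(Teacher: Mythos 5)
Your proof is correct and takes essentially the same approach as the paper's: argue by contradiction, use the finiteness of $\bar g$ demanded by Problem~\ref{pr:2} to bound $g(t)$, and exploit the structural fact that with $\mat{G^{(3)}}$ the control enters $\dot x$ as $-g x^2(1-x)$, which vanishes faster near $x=0$ than the destabilizing linear term $(b-d)x$. The only difference is in the finish: the paper inserts the gain bound into a one-dimensional comparison ODE and invokes Gronwall's inequality together with linearization of that comparison system, whereas you directly exhibit a strip $(0,\delta]$ on which $\dot x>0$ and run a trapping argument---a slightly more elementary and self-contained route to the same contradiction.
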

\begin{proof}
    We prove the statement by way of contradiction. Assume that there exists an adaptation rate $\phi(x)$ such that the innovation gain controller solves Problem~\ref{pr:2}. This implies that $\lim_{t\to\infty} g(t)=\bar g$ for some constant $\bar g$, and thus for any constant $\varepsilon>0$ there exists $\bar t>0$ such that $g(t)\leq \bar g+\varepsilon$ for all $t\geq \bar t$. Therefore, inserting $\mat{G^{(3)}}$ into \eqref{eq:controlled_replicator}, we bound \begin{equation}\label{eq:bound_impossible}
        \begin{alignedat}{1}\dot x&= x(1-x)((a+d-b-c)x+b-d-gx)\\&\geq x(1-x)((a+d-b-c)x+b-d-(\bar g+\varepsilon)x),\end{alignedat}
    \end{equation}
    which is a one-dimensional equation, independent of the dynamics of $g(t)$. 
    
Let us consider the equation
\begin{equation}\label{eq:bound_y}
    \dot y=x(1-y)((a+d-b-c)y+b-d-(\bar g+\varepsilon)y),
\end{equation}
whose right-hand side is the bound from \eqref{eq:bound_impossible}. For \eqref{eq:bound_y}, $\bar y=0$ has the same linear stability properties of $\bar x=0$ for the uncontrolled system in \eqref{eq:replicator}, determined by linearization. Therefore, it is unstable and it cannot be attractive. Finally, Gronwall's inequality on \eqref{eq:bound_impossible}, we conclude that $x(t)\geq y(t)$, solution of \eqref{eq:bound_y} with initial condition $y(0)=x(0)$. Since $y(t)$ does not converge to $0$, necessarily $x(t)$ cannot converge to $0$ too, yielding the claim.
\end{proof}

\begin{theorem}\label{th:stabilization}\rm
The adaptive-gain control $(\mat{G^{(4)}},\phi)$ solves Problem~\ref{pr:2} under Assumption~\ref{a:consensus} with desired consensus equilibrium $\bar x=0$ for any initial condition  $x(0)\in[0,1)$ if the adaptive function $\phi$ is such that:
\begin{enumerate}
    \item $\phi(x)>0$ for $x\in(0,1]$; and
    \item there exist $h,k>0$ such that $\lim_{x\to 0^+}\phi(x)/x^h=k$. 
    \end{enumerate}
Moreover, if the game is a dominant-strategy game ($a>c$ and $b>d$), we further require that
\begin{enumerate}\setcounter{enumi}{2}
    \item exists $\varepsilon^*>0$ such that $\phi(x)>a-c$ for $x\in[1-\varepsilon^*,1]$.
    \end{enumerate}
\end{theorem}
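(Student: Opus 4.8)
The plan is to substitute $\mat{G^{(4)}}$ from \eqref{eq:matrices} into the controlled replicator \eqref{eq:controlled_replicator} and analyze the resulting planar system. Writing $\alpha:=a-c$ and $\beta:=d-b$, and noting that instability of $\bar x=0$ under \eqref{eq:replicator} forces $b>d$, i.e.\ $\beta<0$ (Proposition~\ref{prop:convergence_uncontrolled}), the system becomes
\begin{equation*}
\dot x=x(1-x)\big((\alpha+\beta)x-\beta-g(1-x)\big),\qquad \dot g=g\,\phi(x),
\end{equation*}
where the sign of $\alpha$ separates the anti-coordination case ($\alpha<0$) from the dominant-strategy case ($\alpha>0$). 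The first thing I would record is that condition~2 forces $\phi(0)=0$, so that the \emph{entire segment} $\{0\}\times\mathbb R_{+}$ consists of equilibria; the goal is therefore to show that every trajectory in $\Omega$ converges to one such point $(0,\bar g)$ with $\bar g>0$, rather than to a single isolated equilibrium.

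Next I would establish the monotonicity backbone of the argument: by condition~1, $\dot g=g\phi(x)>0$ whenever $x\in(0,1]$, so $g(\cdot)$ is non-decreasing along trajectories and admits a limit $\bar g\in(g(0),\infty]$. Since $\tfrac{d}{dt}\ln g=\phi(x)$, this limit is finite precisely when $\int_0^\infty\phi(x(t))\,dt<\infty$, so the heart of the proof is the claim that $\bar g<\infty$. I would argue this by contradiction: assuming $g\to\infty$, the term $-g(1-x)$ eventually dominates the bounded quantity $(\alpha+\beta)x-\beta$ on every compact subinterval of $[0,1)$, forcing $\dot x<0$ there; once $x$ falls below a small $\eta$ it cannot return, since $\dot x<0$ at $x=\eta$. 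This drives $x\to0$, and near $0$ one has $\dot x\approx-x(g+\beta)$ with $g\to\infty$, so $x$ decays faster than any fixed exponential; using condition~2 ($\phi(x)\sim kx^h$) this yields $\int_0^\infty\phi(x)\,dt<\infty$, i.e.\ $\bar g<\infty$, contradicting $\bar g=\infty$.

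The delicate point, and what I expect to be the main obstacle, is the behaviour near $x=1$ in the dominant-strategy case $\alpha>0$, which is exactly where condition~3 is needed. Setting $y:=1-x$ gives $\dot x=(1-y)y\big(\alpha-(\alpha+\beta)y-gy\big)$, so the stabilizing control term enters only at order $y^2$ and cannot counteract the destabilizing linear term $\alpha y$ directly; a naive estimate leaves open that $x$ creeps up to $1$ through the shrinking sliver where $\dot x>0$. The device I would use is the composite variable $u:=g(1-x)$. On $[1-\varepsilon^*,1]$ condition~3 gives $\phi\ge\alpha+\sigma$ for some $\sigma>0$ (by compactness), hence $\dot g\ge(\alpha+\sigma)g$; meanwhile, once $g$ is large enough that $\alpha+\beta+g>0$, the lower bound $\dot y\ge-\alpha y$ controls the decay of $1-x$. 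Multiplying the two Gronwall estimates~\cite{Pachpatte_book} yields $u(t)\ge u(t_0)e^{\sigma(t-t_0)}\to\infty$, and since $\dot x=(1-y)y(\alpha-(\alpha+\beta)y-u)$ this forces $\dot x<0$ once $u>\alpha$, ejecting $x$ from $[1-\varepsilon^*,1]$. In the anti-coordination case $\alpha<0$ this step is unnecessary, because $\dot x\approx\alpha(1-x)<0$ already makes $x=1$ repelling, which is precisely why condition~3 is imposed only for dominant-strategy games.

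Finally, with $\bar g<\infty$ in hand the conclusion becomes clean: the trajectory remains in the compact set $[0,1]\times[g(0),\bar g]$, so the vector field is bounded, $x(\cdot)$ is uniformly continuous, and hence so is $\phi(x(\cdot))$. Since $\int_0^\infty\phi(x(t))\,dt=\ln(\bar g/g(0))<\infty$, Barbalat's lemma~\cite{khalil2002nonlinear} gives $\phi(x(t))\to0$; because condition~1 together with $\phi(0)=0$ makes $x=0$ the only zero of $\phi$ in $[0,1]$, this forces $x(t)\to0$ (automatically ruling out interior limits and oscillations without a separate case analysis). As $\bar g\ge g(0)>0$, the gain converges to a finite positive constant, so both requirements of Problem~\ref{pr:2} are satisfied.
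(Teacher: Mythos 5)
Your proposal is correct and follows essentially the same route as the paper's proof: monotonicity of $g$ under condition~1 giving the dichotomy $\bar g<\infty$ versus $g\to\infty$, ruling out divergence by showing $g\to\infty$ would force super-exponential decay of $x$ and hence (via condition~2) integrability of $\phi(x(t))$ and boundedness of $g$, and using condition~3 only in the dominant-strategy case to show the product $g(1-x)$ grows exponentially and ejects trajectories from a neighborhood of $x=1$ (your composite variable $u=g(1-x)$ is exactly the paper's implicit device). The only notable difference is cosmetic but in your favor: in the finite-limit case you obtain $\phi(x(t))\to 0$ via Barbalat's lemma from $\int_0^\infty\phi(x(t))\,dt=\ln(\bar g/g(0))<\infty$, which is a more rigorous justification than the paper's direct assertion that $g(t)\to\bar g$ implies $\dot g(t)\to 0$.
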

\begin{proof}
We start by considering the scenario of a dominant-strategy game in which coordination on action $1$ is the NE ($a>c$ and $b>d$). Similar to the proof of Theorem~\ref{th:approach_coordination}, we define two strictly positive constants $\alpha=a-c>0$ and $\beta=b-d>0$, noting $\beta$ differs from the proof of Theorem~\ref{th:approach_coordination} by the sign. In such a scenario, the controlled replicator dynamics in \eqref{eq:controlled_replicator} reduces to the following planar system of ODEs:
\begin{equation}\label{eq:controlled_dominant_innovation}
    \begin{alignedat}{1}
            \dot x&=x(1-x)(\alpha x +(\beta-g)(1-x))\\
            \dot g&=g\phi(x).
    \end{alignedat}
\end{equation}

Being $\phi(x)\geq 0$, then $\dot g\geq 0$, and thus $g(t)$ is monotonically non-decreasing. So either i) $g(t)\to\bar g$ or ii) $g(t)\to\infty$. In scenario i), since $\lim_{t\to\infty}g(t)=\bar g$, then necessarily $\lim_{t\to\infty}\dot g(t)= 0$. This implies $\lim_{t\to\infty}\phi(x(t))=0$ and, ultimately, $\lim_{t\to\infty}x(t)=0$ due to assumption 1) in the Theorem's statement, solving Problem~\ref{pr:2}. In the following, we rule out scenario ii) by way of contradiction.

To obtain a contradiction, assume that $g(t)\to\infty$. First, we exclude the possibility that $x(t)$ converges to $1$. This is done by following the same line of reasoning used in the proof of Theorem~\ref{th:approach_coordination}. In fact, due to condition 3) of the Theorem's statement, $g(t)$ grows in a neighborhood of $x=1$ faster than an exponentially-growing function with exponent $\alpha$. Hence, $g(t)>\beta$ in finite time $\tau_1$, yielding $\dot x\leq \alpha x(1-x)$ for any $t\geq \tau_1$. Consequently, $x(t)\to 1$ at an exponential rate no larger than $\alpha$. Combining this with the fact that $g(t)$ grows exponentially at a rate larger than $\alpha$, it follows that the term $g(1-x)$ grows exponentially large, guaranteeing that there exists a time $\tau_2$ such that $\dot x(t)<0$ for all $t\geq \tau_2$, and thus $x(t)$ eventually decreases and cannot converge to $1$.

Since $x(t)$ does not converge to $1$ and we have assumed that $g(t)\to\infty$, then there exist two positive constants $\varepsilon_1>0$ and $\tau_3<\infty$ such that i) $x(\tau_3)\leq 1-\varepsilon_1$ and ii) $g(\tau_3)\geq \alpha/\varepsilon_1+\beta+\gamma$, where $\gamma>0$ is a sufficiently small but positive constant. This implies that $\dot x(\tau_3)\leq -\gamma x(1-x)^2$. Using again the continuity argument used in the proof of Theorem~\ref{th:approach_coordination} and the fact that $g(t)$ is monotonically increasing, we conclude that the bound $\dot x(t)\leq -\gamma \varepsilon_1^2 x$ holds true for any $t\geq\tau_3$. Hence, we  use Gronwall's lemma to derive an exponential bound on the convergence of $x(t)$ to $0$ as 
\begin{equation}\label{eq:exp_conv_bound}
    x(t)\leq (1-\varepsilon_1)e^{-\gamma\varepsilon_1^2(t-\tau_3)},
\end{equation} for any $t\geq \tau_3$. 

At this stage, due to the definition of limit and using condition 2) of the Theorem's statement, we establish that there necessarily exists a constant $\varepsilon_2>0$ such that $\phi(x)\leq 2kx^h$. Moreover, using the exponential convergence bound on $x(t)$ established in \eqref{eq:exp_conv_bound}, we  guarantee that $x(t)<\varepsilon_2$, for any time $t>\tau_4:=\max\{\tau_3,\tau_3-\gamma-\varepsilon_1^2-\ln(\varepsilon_2)+\ln(1-\varepsilon_1)\}$.

Observe that $M:=\max_{x\in[0,1]}\phi(x)<\infty$, being the maximum of a continuous function over a compact set. Hence, we are ready to obtain the last piece of our proof by contradiction, by integrating the second equation in \eqref{eq:controlled_dominant_innovation}, obtaining         $g(t)=g(0)e^{k\int_0^t \phi(x(s))ds}$. Finally, by using that $\phi(x(t))\leq M$ for all $t\leq\tau_4$ and \begin{equation}
    \phi(x(t))\leq 2kx^h\leq 2k(1-\varepsilon_1)^he^{-h\gamma\varepsilon_1^2(t-\tau_3)}
\end{equation}
for all $t>\tau_4$, we obtain the following bound:
    \begin{equation}\begin{alignedat}{1}
       \displaystyle \lim_{t\to\infty}g(t) &=   \displaystyle\lim_{t\to\infty}g(0)e^{\int_0^t \phi(x(s))ds} \\
       &=     g(0)e^{\int_0^{\tau_4}\phi(x(s)) ds}e^{\int_{\tau_4}^{\infty}\phi(x(s))ds}\\
       &\leq g(0)e^{\int_0^{\tau_4}M ds}e^{k(1-\varepsilon_1)^h\int_{\tau_4}^{\infty}e^{-h\gamma\varepsilon_1^2(t-\tau_4)} ds}\\
       &\leq g(0)e^{M\tau_4}e^{k(1-\varepsilon)^h\int_0^\infty e^{-h\gamma\varepsilon_1^2s}ds}\\
     &\leq  g(0)e^{M\tau_4}e^{\frac{k(1-\varepsilon_1)^h}{h\gamma\varepsilon_1^2}}<+\infty,
    \end{alignedat}\end{equation}
    which contradicts the assumption that $g(t)\to\infty$, yielding the claim.

Now, we consider the scenario of an anti-coordination game. Defining two strictly positive constants $\alpha=c-a>0$ and $\beta=b-d>0$, the controlled system reduces to
\begin{equation}\label{eq:controlled_anticoordination_innovation}
    \begin{alignedat}{1}
            \dot x&=x\big(1-x\big)\big(-\alpha x+(\beta-g) (1-x)\big)\\
            \dot g&=g\phi(x).
    \end{alignedat}
\end{equation}
The analysis is similar to the one of the dominant-strategy game in \eqref{eq:controlled_dominant_innovation}. Specifically, we proceed by contradiction, showing that the limit of $g(t)$ exists finite and thus $x(t)$ must converge to $0$. It worth noticing that the negative sign in the first equation of \eqref{eq:controlled_anticoordination_innovation} simplifies the treatment. In fact, $\dot x$ is always negative in a neighborhood of $x=1$, regardless of $g(t)$. This implies that convergence to $x=1$ is excluded without needing any additional condition on $\phi$. The rest of the proof then follows the identical line of arguments used for the dominant-strategy game, and detailed calculations are omitted. \end{proof}

Different from the consensus reaching problem, the consensus stabilization problem can only be solved using a conformity gain controller, i.e., by acting on the diagonal entry of the payoff matrix. However, assuming that the diagonal matrix is controllable, then the conditions required for the adaptive function $\phi$ are quite mild, as discussed in the following.

\begin{remark}
     The first two conditions on $\phi$ in Theorem~\ref{th:stabilization} can be easily satisfied by any polynomial function that is strictly positive in $(0,1]$ and has no constant term. For instance, we can define the class of adaptation rate:
     \begin{equation}\label{eq:power}
         \phi(x)=px^q,
     \end{equation}
     with parameters $p>0$ and $q>0$ regulating velocity and reactivity of the adaptation process, respectively, which encompasses the linear case ($q=1$) considered in~\cite{zino2023_adaptive}. The third condition, which is needed only for dominant-strategy games, is similar to the one discussed in Remark~\ref{rem:adaptation2} and  requires limited information on the structure of the payoff function. 
\end{remark}
\begin{remark}\label{rem:other_eq2}
    In this section, we focused on the scenario in which the goal is to steer the system to the unstable equilibrium $\bar x=0$. Similarly, we can design an adaptive-gain controller to steer the system to $\bar x=1$ when unstable by utilizing   a conformity gain controller with $\mat{G^{(1)}}$, replacing condition 2) in Theorem~\ref{th:stabilization} with requiring the existence of two positive constants $h,k>0$ such that $\lim_{x\to 1^-}\phi(x)/(1-x)^h=k$, and condition 3), when needed, with the one already presented in Remark~\ref{rem:other_eq1}.
\end{remark}

\begin{figure}
    \centering
\subfloat[Trajectory]{\includegraphics[]{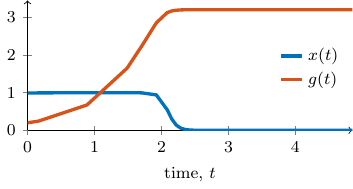}\label{fig:5a}}\quad\subfloat[Final gain, $\bar g$]{\includegraphics[]{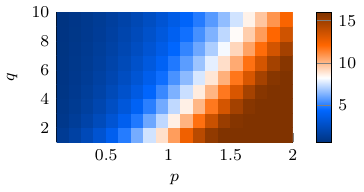}\label{fig:5b}}
    \caption{Controlled prisoner's dilemma with $a=1$, $b=3$, $c=0$, $d=2$ and $x(0)=0.99$, using conformity gain control with $\phi$ from \eqref{eq:power}. In (a), we report a trajectory of the controlled dynamics (with $p=0.4$ and $q=1$); in (b) the final gain $\bar g$, for different choices of the parameters $p$ and $q$. \label{fig:5}}
\end{figure}

Finally, it is worth noticing that the adaptation rate used in Theorem~\ref{th:stabilization} yields a monotonically increasing gain (see Fig.~\ref{fig:5a}). While sufficient to solve Problem~\ref{pr:2}, this might result in a much larger $\bar g$ than actually needed to guarantee convergence to the desired equilibrium. Indeed, the choice of the adaptation rate plays an important role in determining the asymptotic value of the gain, as can be observed in Fig.~\ref{fig:5b}). In particular, it seems that slow-to-moderate adaptation velocities (small $p$) with large reactivity $q$ yields better performance, i.e., smaller asymptotic gains. Moreover, another viable method may involve adding additional decay terms to the gain dynamics so $g(t)$ is not monotonic, similar to those proposed in~\cite{mei2016adaptive} for the consensus problem.

\section{Set-point regulation Problem}\label{sec:setpoint}

The set-point regulation problem, viz. Problem~\ref{pr:3}, consists of steering the system to a mixed strategy state $x=\bar x$, which in general needs not be an equilibrium of the uncontrolled dynamics. The more complex nature of this problem calls for the development of specific implementations of our adaptive-gain control method, tailored to the characteristics of the game in question. 

\subsection{Dominant-strategy games}\label{sec:dominant_sp}

We start by considering a dominant-strategy game. Without loss of generality, we focus on the case in which the dominant strategy is action~$1$ and so the uncontrolled dynamics would converge to $x=1$ from any initial condition $x(0)\in(0,1]$.

In order to design the adaptive-gain controller, we recall the behavior of the uncontrolled dynamics (studied in Proposition~\ref{prop:convergence_uncontrolled}). In particular, we recall that the uncontrolled dynamics has always a drift in the direction of increasing $x$, since the right-hand side of the ODE is always strictly positive in $(0,1)$, and the equilibrium $x=1$ is globally attractive from any initial condition in $(0,1)$. Hence, we make the following observations. First, the controller should (adaptively) increase the payoff associated with action~$2$. As a consequence, innovation-gain control will use matrix $\mat G^{(3)}$ and coordination-gain control will use $\mat G^{(4)}$. Second, it is necessary to control the system when $x(t)$ is larger than the desired equilibrium $\bar x$, while below $\bar x$ the uncontrolled dynamics is sufficient to drive the system to $\bar x$. This, will guide us in the design of the adaptation rate $\phi(x)$. Third, it is key for the controller to prevent convergence to $x=1$. To this aim, innovation-gain control ($\mat G^{(3)}$) seems more efficient as it requires no additional conditions (see Theorems~\ref{th:approach_coordination} and~\ref{th:approach_innovation}). Based on these considerations, we define the adaptation rate adopting the form of a proportional controller, i.e., 
\begin{equation}\label{eq:linear_phi}
    \phi(x)=p(x-\bar x),
\end{equation}
where $p>0$ captures the velocity of the adaptation process. In the following, we prove that this control action guarantees solution of Problem~\ref{pr:3}. 

\begin{figure*}
    \centering
\subfloat[]{\includegraphics[]{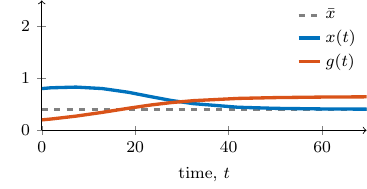}\label{fig:6a}}\subfloat[]{\includegraphics[]{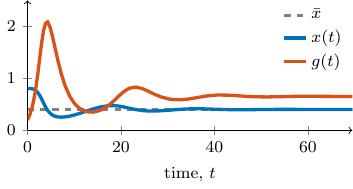}\label{fig:6b}}\\\subfloat[]{\includegraphics[]{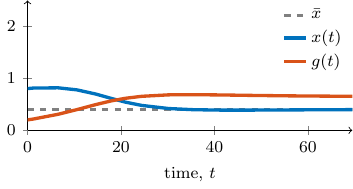}\label{fig:6c}}\subfloat[]{\includegraphics[]{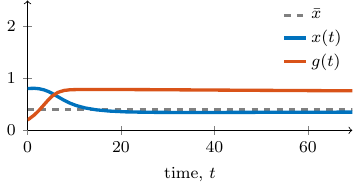}\label{fig:6d}}\\
\subfloat[]{\includegraphics[]{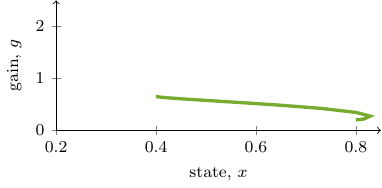}\label{fig:6e}}\subfloat[]{\includegraphics[]{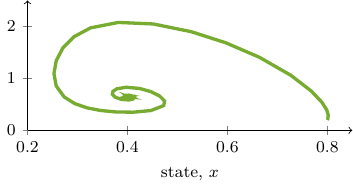}\label{fig:6f}}\\\subfloat[]{\includegraphics[]{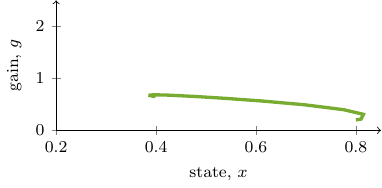}\label{fig:6g}}\subfloat[]{\includegraphics[]{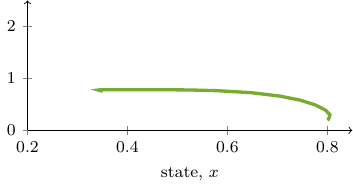}\label{fig:6h}}
    \caption{Simulations of the set-point regulation problem for a prisoners' dilemma (dominant-strategy game). Trajectories are represented in terms of (a--d) time-evolution and (e---h) state-space evolution. In (a,e) and (b,f), we consider the adaptive-gain controller from Theorem~\ref{th:creation} with $p=0.1$ and $p=2$, respectively. Panels (c,g) and (d,h) show controlled trajectories with nonlinear adaptation rates: with saturation in (c,g) and with fast-waning rate in (d,h). }
    \label{fig:case3}
\end{figure*}

\begin{theorem}\label{th:creation}\rm
The adaptive-gain control $(\mat{G^{(3)}},\phi)$ with $\phi$ defined in \eqref{eq:linear_phi} solves Problem~\ref{pr:3} for a dominant-strategy game with $a>c$ and $b>d$, 
for any initial condition  $x(0)\in(0,1)$. Precisely, the state converges to the desired equilibrium $\bar x$ and the gain converges to 
\begin{equation}
    \label{eq:g_bar4}
    \bar g=a-c+(b-d)\frac{1-\bar x}{\bar x}.
\end{equation}
\end{theorem}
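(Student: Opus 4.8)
The plan is to reduce the controlled dynamics to an explicit planar system, confirm that $(\bar x,\bar g)$ is the unique interior equilibrium, and then prove almost-global convergence through a tailored Lyapunov function together with LaSalle's invariance principle. Substituting $\mat{G^{(3)}}$ into \eqref{eq:controlled_replicator} and using \eqref{eq:linear_phi}, the system becomes
\begin{equation}
\dot x = x(1-x)\big((a+d-b-c)x + b - d - gx\big),\qquad \dot g = g\,p(x-\bar x).
\end{equation}
Setting $\dot x=0$ at an interior point $x=\bar x$ and solving for $g$ recovers exactly the value $\bar g$ in \eqref{eq:g_bar4}; since $a>c$ and $b>d$, this $\bar g$ is strictly positive, so the equilibrium lies in the interior of $\Omega$. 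I would also note that $x=0$ and $x=1$ are invariant, so any trajectory from $x(0)\in(0,1)$ remains in the open strip, while $g(t)>0$ for all $t$ by Lemma~\ref{lemma:invariance}.

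The key step is to search for a Lyapunov function of Lotka--Volterra type,
\begin{equation}
V(x,g) = W(x) + \frac1p\big(g - \bar g\ln g\big),
\end{equation}
where $W$ is chosen to cancel the indefinite cross terms. Writing $\Phi(x):=(a+d-b-c)x+b-d$ and using the algebraic identity $\Phi(x)-gx = -\frac{b-d}{\bar x}(x-\bar x)-(g-\bar g)x$, a direct computation shows that the contribution proportional to $(g-\bar g)$ in $\dot V$ vanishes identically precisely when $W'(x)=\frac{x-\bar x}{x^2(1-x)}$. With this choice one is left with
\begin{equation}
\dot V = -\frac{b-d}{\bar x}\,\frac{(x-\bar x)^2}{x} \le 0,
\end{equation}
with equality only on the line $\{x=\bar x\}$. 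The integrand shows $W$ has a strict minimum at $\bar x$ and diverges as $x\to 0^+$ or $x\to 1^-$, while $g-\bar g\ln g$ diverges as $g\to 0^+$ or $g\to\infty$; hence $V$ is proper on $(0,1)\times\mathbb R_{+}$ with a unique minimum at $(\bar x,\bar g)$, which guarantees that its sublevel sets are compact subsets of the open domain.

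To conclude I would invoke LaSalle's invariance principle: the largest invariant set inside $\{\dot V=0\}=\{x=\bar x\}$ requires $\dot x=\bar x^2(1-\bar x)(\bar g-g)=0$, forcing $g=\bar g$, so it collapses to the single point $(\bar x,\bar g)$; every interior trajectory therefore converges to $(\bar x,\bar g)$, establishing both $x(t)\to\bar x$ and $g(t)\to\bar g$. I expect the main obstacle to be discovering the correct integrating weight $W'(x)=\frac{x-\bar x}{x^2(1-x)}$: once identified, the cancellation is mechanical, but guessing the denominator $x^2(1-x)$ --- so that the multiplicative gain structure and the replicator factor $x(1-x)$ align to remove the $(g-\bar g)$ term --- is the nontrivial insight. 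A secondary point requiring care is verifying properness of $V$ on all four boundary pieces, which is exactly what legitimizes the application of LaSalle on the open (rather than compact) domain.
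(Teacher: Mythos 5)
Your proposal is correct and follows essentially the same route as the paper: your $W$ with $W'(x)=\frac{x-\bar x}{x^2(1-x)}$ integrates to exactly the paper's $\frac{\bar x}{x}+(1-\bar x)\ln\bigl(\frac{x}{1-x}\bigr)$, so your Lyapunov function, the dissipation identity $\dot V=-\frac{b-d}{\bar x}\frac{(x-\bar x)^2}{x}$, and the LaSalle step coincide with the published proof. The only cosmetic difference is discovery: the paper obtains $V$ as a first integral of the skew (conservative) part of the vector field via a separable ODE, whereas you posit a Lotka--Volterra ansatz and solve the cross-term cancellation condition, which amounts to the same equation.
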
\medskip

\begin{proof}
Similar to the proof of Theorem~\ref{th:stabilization}, let $\alpha=a-c>0$ and $\beta=b-d>0$. With $\mat{G}=\mat{G^{(3)}}$, the controlled replicator equation reads
\begin{equation}\label{eq:controlled_dominant_creation}
    \begin{alignedat}{1}
            \dot x=&x\big(1-x\big)\big(\alpha x+\beta(1-x)-gx\big)\\
            \dot g=&gp(x-\bar x).
    \end{alignedat}
\end{equation}
We start by characterizing the equilibria of \eqref{eq:controlled_dominant_creation}, which are:
\begin{enumerate}
    \item $(x,g)=(0,0)$;
    \item $(x,g)=(1,0)$; and
    \item $(x,g)=(\bar x, \bar g)$, with $\bar g$ from \eqref{eq:g_bar4}.
\end{enumerate}

The Jacobian of \eqref{eq:controlled_dominant_creation} is equal to
 \begin{align*}
\begin{bmatrix}
   \alpha+\beta-2x(\alpha+2\beta+g)+3x^2(\beta+g)&-x^2(1-x)\\
     gp&p(x-\bar x)
    \end{bmatrix}.
\end{align*}
Evaluating it at the three equilibrium points, we observe that $(0,0)$ and $(1,0)$ are unstable saddle points, while $(\bar x, \bar g)$ is always (locally) exponentially stable. The last fact is obtained using the trace-determinant method: 
the determinant is always positive and the trace is negative.

We are now left to prove global convergence. It is helpful to re-write the first equation in \eqref{eq:controlled_dominant_creation} as
\begin{equation}\label{eq:controlled_dominant_creation2}
    \begin{alignedat}{1}
            \dot x&=x(1-x)(\alpha x+\beta(1-x)\hspace{-.05cm}-\hspace{-.05cm}\bar gx)\hspace{-.05cm}-\hspace{-.05cm}x^2(1\hspace{-.05cm}-\hspace{-.05cm}x)(g-\bar g)\\&=\displaystyle\beta {\bar x}^{-1}x(1-x)(\bar x-x)-x^2(1-x)(g-\bar g),
    \end{alignedat}
\end{equation}
from which we can write the entire system by separating the ODEs in the sum of two terms: $F(x,g)$ that depends on both variables, and $G(x)$ that depends only on $x$, obtaining
\begin{equation}\label{eq:controlled_dominant_creation_reduced}
           \begin{bmatrix}
               \dot x\\
               \dot g
           \end{bmatrix} =\begin{bmatrix}-x^2(1-x)(g-\bar g)\\gp(x-\bar x)\end{bmatrix}+\begin{bmatrix}\beta\bar x^{-1} x(1-x)(\bar x-x)\\0\end{bmatrix}.
\end{equation}

We start by focusing on the reduced system that considers only the first block, viz. $F(x,g)$. That is, 
\begin{equation}\label{eq:controlled_periodic}
          \begin{alignedat}{1}\displaystyle\dot x=&-x^2(1-x)(g-\bar g)\\\displaystyle\dot g=&gp(x-\bar x),\end{alignedat}
\end{equation}
from which we derive
\begin{equation}
    \frac{dx}{dg}=-\frac{x^2(1-x)(g-\bar g)}{pg(x-\bar x)}\implies  \frac{x-\bar x}{x^2(1-x)}dx=-\frac{(g-\bar g)}{pg}dg,
\end{equation}
which is a separable ODE that can be solved analytically by integrating both sides, obtaining the first integral of \eqref{eq:controlled_periodic}:
\begin{equation}\label{eq:lyapunov}
    V(x,g)=\frac{\bar x}{x}+(1-\bar x)\ln \Big(\frac{x}{1-x}\Big)+\frac{1}{p}\big(g-\bar g\ln g\big)+K,
\end{equation}
where $K\in R$ is a constant. In other words, we can easily check that $V(x,g)$ is conserved along trajectories of \eqref{eq:controlled_periodic} by explicitly computing
\begin{equation}\label{eq:nablaV}
    \nabla V(x,g)= \begin{bmatrix}
                \frac{\partial V}{\partial x}\\
                \frac{\partial V}{\partial g}
           \end{bmatrix}= \begin{bmatrix}
                \frac{x-\bar x}{x^2(1-x)}\\
                \frac{1}{p}\big(1-\frac{\bar g}{g}\big)
           \end{bmatrix},
\end{equation}
and computing the scalar product $\nabla V(x,g)^\top F(x,g)$. We further observe that $V$ has a global minimum at $(\bar x,\bar g)$. Hence, we can enforce $V(\bar x,\bar g)=0$ by setting 
\begin{equation}
    K=-1-(1-\bar x)\big(\ln\bar x-\ln(1-\bar x)\big)\frac{\bar g}{p}(1-\ln \bar g).
\end{equation}
Consequently, it holds true that $V(x,g)>0$ for all $x\in[0,1]$ and $g\geq 0$, and $V(x,g)=0$ if and only if $x=\bar x$ and $g=\bar g$. Finally, we observe that $V$ diverges to $+\infty$ at the boundaries of the domain $[0,1]\times[0,\infty]$. These properties make $V(x,g)$ a suitable Lyapunov candidate function.

Now, we verify that the Lyapunov candidate function $V(x,g)$ satisfies the property $\frac{dV}{dt}\leq 0$ along any trajectory of the controlled replicator equation in \eqref{eq:controlled_dominant_creation_reduced} and $\frac{dV}{dt}= 0$ if and only if $x=\bar x$. Using \eqref{eq:nablaV}, we compute
\begin{equation}\begin{alignedat}{1}
    \displaystyle\frac{dV}{dt}&= \nabla V(x,g)^\top\cdot[\dot x\,\,\,\dot g]\\&=\nabla V(x,g)^\top\cdot F(x,g)+\nabla V(x,g)^\top\cdot G(x)\\
    &=\displaystyle 0+\frac{\partial V}{\partial x}\beta \bar x^{-1}x(1-x)(\bar x-x)\\
           &=\displaystyle \frac{x-\bar x}{\bar xx^2(1-x)}\beta \bar x(1-x)(\bar x-x)\\
           &=\displaystyle -\beta\frac{1}{x\bar x}(\bar x-x)^2\leq 0.
\end{alignedat}\end{equation}
Note that on the manifold $x=\bar x$, the only invariant set is the constant solution $(\bar x,\bar g)$. In fact, for $x=\bar x$ and $g\neq \bar g$, from the first equation in \eqref{eq:controlled_dominant_creation2} it holds $\dot x\neq 0$. Therefore, LaSalle's invariance principle~\cite{khalil2002nonlinear} guarantees asymptotic convergence to the desired equilibrium point.
\end{proof}

\begin{remark}
    The case in which $c>a$ and $d>b$ can be treated in a similar way  using our adaptive-gain controller with control matrix $\mat{G^{(2)}}$ and $\phi(x)=-p(x-\bar x)$.    
\end{remark}

\subsection{Anti-coordination games}

Now, we focus on anti-coordination games, which have an equilibrium $x^*=\frac{\beta}{\alpha+\beta}$ that is globally attractive from any initial condition in the interior (Proposition~\ref{prop:convergence_uncontrolled}). The presence of this equilibrium slightly complicates the design of the adaptive-gain controller. In particular, it requires us to have some additional information on whether such equilibrium is larger or smaller than the desired one.

\begin{assumption}\label{a:xstar}
    It is known whether the desired equilibrium $\bar x$ is such that $\bar x>x^*$ or $\bar x<x^*$.
\end{assumption}

Without any loss in generality, we assume that we want to steer the system to an equilibrium $\bar x<x^*$. In this case, we observe that our adaptive-gain controller should act similarly to what was discussed for a dominant strategy game in Section~\ref{sec:dominant_sp}. In fact, below the desired equilibrium, the uncontrolled dynamics (studied in Proposition~\ref{prop:convergence_uncontrolled}) has a constant drift in the direction of increasing $x$. Hence, the controller should (adaptively) increase the payoff associated with action~$2$, when  $x(t) > \bar x$.


\begin{theorem}\label{th:creation2}\rm
The adaptive-gain control $(\mat{G^{(3)}},\phi)$ with $\phi$ defined in \eqref{eq:linear_phi} solves Problem~\ref{pr:3} 
for an anti-coordination game with $\bar x<x^*$, for any initial condition  $x(0)\in(0,1)$. Precisely, the state converges to the desired equilibrium $\bar x$ and the gain converges to 
\begin{equation}
    \label{eq:g_bar5}
    \bar g=a+d-c-b+\frac{b-d}{\bar x}.
\end{equation}
\end{theorem}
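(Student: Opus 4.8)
The plan is to mirror the structure of the proof of Theorem~\ref{th:creation} as closely as possible, since the controller and adaptation rate are identical and only the underlying game class has changed. First I would substitute $\mat{G}=\mat{G^{(3)}}$ into \eqref{eq:controlled_replicator} for an anti-coordination game ($a<c$ and $b>d$). Defining $\alpha=c-a>0$ and $\beta=b-d>0$ as in the proof of Theorem~\ref{th:stabilization}, the controlled system should read
\begin{equation}\begin{alignedat}{1}
  \dot x&=x(1-x)\big(-\alpha x+\beta(1-x)-gx\big),\\
  \dot g&=gp(x-\bar x).
\end{alignedat}\end{equation}
I would then verify that the equilibria are $(0,0)$, $(1,0)$, and $(\bar x,\bar g)$ with $\bar g$ given by \eqref{eq:g_bar5}, and confirm via the Jacobian that $(0,0)$ and $(1,0)$ are saddle points while $(\bar x,\bar g)$ is locally exponentially stable (positive determinant, negative trace). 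The assumption $\bar x<x^*$ should be what guarantees $\bar g>0$, so I would check this consistency explicitly.

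The core of the proof is the global convergence argument, for which I expect the same Lyapunov/first-integral technique to work. I would decompose the vector field into a part $F(x,g)$ depending on both variables and a part $G(x)$ depending only on $x$, exactly as in \eqref{eq:controlled_dominant_creation_reduced}, where $G(x)$ is the drift pushing $x$ toward $\bar x$. The reduced system $\dot x = F_1(x,g)$, $\dot g = gp(x-\bar x)$ should again be integrable via separation of variables, yielding a conserved quantity $V(x,g)$ of the same functional form as \eqref{eq:lyapunov}, namely a sum of a function of $x$ and the term $\tfrac1p(g-\bar g\ln g)$. I would then argue that $V$ has a strict global minimum at $(\bar x,\bar g)$, diverges at the boundaries of $[0,1]\times\mathbb{R}_+$, and is therefore a valid Lyapunov candidate. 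Computing $\tfrac{dV}{dt}=\nabla V^\top F+\nabla V^\top G$, the first term vanishes identically (since $V$ is conserved along $F$), and the second term should reduce to a negative-semidefinite expression of the form $-(\text{const})\tfrac{1}{x\bar x}(\bar x-x)^2\le0$. Finally I would invoke LaSalle's invariance principle, checking that the only invariant set on $\{x=\bar x\}$ is $(\bar x,\bar g)$.

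The main obstacle I anticipate is the first-integral computation itself: whether the $x$-dependent coefficient of $(g-\bar g)$ in the decomposed $\dot x$ equation is again exactly $-x^2(1-x)$, as it was for the dominant-strategy game. Because the gain multiplies the same term $gx$ in $\dot x$ regardless of the game class, I expect the coefficient structure to be preserved, so that separation of variables goes through verbatim and $V$ retains the form \eqref{eq:lyapunov}. The only genuine difference is the purely-$x$ drift $G(x)$, which for the anti-coordination game carries the sign pattern of $-\alpha x+\beta(1-x)$; I would need to confirm that after factoring through $\beta\bar x^{-1}(\bar x-x)$ (using the relation defining $\bar g$), the sign of $\tfrac{dV}{dt}$ comes out nonpositive. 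A subtlety worth flagging is that the argument must still work despite the presence of the uncontrolled interior equilibrium $x^*$; but since $x^*$ is not an equilibrium of the \emph{controlled} system and the Lyapunov function is global, this should cause no trouble. I would therefore expect to be able to state that the remaining calculations are analogous to those in the proof of Theorem~\ref{th:creation} and omit the routine algebra.
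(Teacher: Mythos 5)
Your proposal follows the paper's proof essentially verbatim: the paper likewise substitutes $\mat{G^{(3)}}$ into \eqref{eq:controlled_replicator}, rewrites $\dot x$ as $\beta\bar x^{-1}x(1-x)(\bar x-x)-x^2(1-x)(g-\bar g)$ with $\bar g$ from \eqref{eq:g_bar5}, observes that $\bar x<x^*$ is necessary and sufficient for $\bar g>0$, and then reuses the Lyapunov function \eqref{eq:lyapunov} and LaSalle's invariance principle exactly as in Theorem~\ref{th:creation}. The point you flagged as the main obstacle is indeed the crux, and your resolution is correct: $\mat{G^{(3)}}$ contributes the term $-gx$ to the reward difference regardless of the game class, so the coefficient of $(g-\bar g)$ in the decomposition is again $-x^2(1-x)$ and the first integral carries over unchanged.

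One correction, however. Your side remark that ``$x^*$ is not an equilibrium of the \emph{controlled} system'' is false, and correspondingly your equilibrium list is incomplete. Setting $g=0$ in the controlled system shows that $(x^*,0)$, with $x^*=\beta/(\alpha+\beta)$, \emph{is} an equilibrium of the controlled planar dynamics, in addition to $(0,0)$, $(1,0)$, and $(\bar x,\bar g)$; this is a genuine structural difference from the dominant-strategy case of Theorem~\ref{th:creation}, where no interior uncontrolled equilibrium exists. At $(x^*,0)$ the Jacobian is upper triangular with eigenvalues $-x^*(1-x^*)(\alpha+\beta)<0$ and $p(x^*-\bar x)>0$, so it is a saddle whose stable manifold contains the segment $(0,1)\times\{0\}$. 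The reason it causes no trouble is not that it is absent, but that it lies on the boundary $g=0$, where $V$ from \eqref{eq:lyapunov} diverges (through the term $-\bar g\ln g$, using $\bar g>0$): any trajectory with $g(0)>0$ and $x(0)\in(0,1)$ remains in a compact sublevel set of $V$ contained in $(0,1)\times(0,\infty)$, so LaSalle's principle still singles out $(\bar x,\bar g)$ as the unique limit. With that sentence and the equilibrium enumeration fixed, your argument is sound and coincides with the paper's.
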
\medskip
\begin{proof}
By defining $\alpha=c-a>0$ and $\beta=b-d>0$, the controlled system with innovation-gain control ($\mat G=\mat{G^{(3)}}$) reads
\begin{equation}\label{eq:controlled_anticoordination_creation}
    \begin{alignedat}{1}
            \dot x&=x(1-x)(-(g+\alpha) x+\beta (1-x))\\&=\beta \bar x^{-1}x(1-x)(\bar x-x)-x^2(1-x)(g-\bar g)\\
            \dot g&=gp(x-\bar x),
    \end{alignedat}
\end{equation}
with $\bar g$ from \eqref{eq:g_bar5}. Notice that the condition $\bar x<x^*$ is necessary and sufficient to guarantee $\bar g>0$, i.e., that the desired equilibrium is a feasible point of the domain.  Similar to the proof of Theorem~\ref{th:creation}, we proceed by proving that \eqref{eq:lyapunov} is monotonically non-increasing along the trajectories of \eqref{eq:controlled_anticoordination_creation}, and strictly decreasing for $x\neq \bar x$. One the manifold $x=\bar x$, the unique (and thus largest) invariant set can be proved to be the desired equilibrium point $(x=\bar x, g=\bar g)$. Hence, LaSalle's invariance principle yields the claim. 
\end{proof}

\begin{remark}
    The case in which $\bar x>x^*$  can be treated in a similar way  using our adaptive-gain controller with control matrix $\mat{G^{(2)}}$ and $\phi(x)=-p(x-\bar x)$.    
\end{remark}

 Theorems~\ref{th:creation} and~\ref{th:creation2} guarantee that our adaptive-gain control scheme can be used to solve the set-point regulation problem for an anti-coordination or a dominant-strategy game. In other words, using our method, we can steer a replicator equation to any desired equilibrium point in a closed-loop fashion, without any a priori knowledge of the structure of the game (except for knowing the class of game). Figure~\ref{fig:case3} illustrates the theoretical findings in Theorem~\ref{th:creation}, comparing the performance of different choices of the parameter $p$, which regulates the velocity of the adaptation process. In particular, by comparing Fig.~\ref{fig:6a} and Fig.~\ref{fig:6b}, we observe how larger values of the parameter $p$ (which is increased from $p=0.1$ to $p=2$) guarantees faster convergence, but they generate transient oscillations about the desired equilibrium (yielding thus higher peaks in the gain), while with slower adaptation rates the convergence seems to be monotone. This interesting difference can be observed also from the state-space plots in Figs.~\ref{fig:6e} and~\ref{fig:6f}.

Finally, it is worth noticing that Theorem~\ref{th:creation} provides a sufficient condition for the adaptation rate to guarantee the solution of Problem~\ref{pr:3}, i.e., to have the linear form in \eqref{eq:linear_phi}. Intuitively, more general expressions of the adaptation rate obtain by composing \eqref{eq:linear_phi} with functions that preserves the sign may be adopted. However, the general analytical treatment of such scenarios is nontrivial, since the Lyapunov-like function used in the proof of Theorem~\ref{th:creation} cannot be directly employed. Moreover, numerical simulations suggest a nontrivial dependence of the outcome on the adaptation rate. In fact, while the use of a saturation term may hasten the convergence (see, e.g., Figs.~\ref{fig:6c} and~\ref{fig:6g} where we use $\phi(x)=\text{atan}(x-\bar x)$), nonlinear functions that wane too fast as $x(t)$ approaches the desired equilibrium may fail to yield convergence, as shown in Figs.~\ref{fig:6d} and~\ref{fig:6h} where $\phi(x)=(x-\bar x)^3$ produces an offset between the desired equilibrium and the asymptotic state of the system. The results of our numerical simulations suggest that the generalization of our theoretical results to nonlinear adaptation rates is nontrivial.


\section{Conclusion}\label{sec:conclusions}

In this paper, we dealt with the equilibrium selection problem for the replicator equation by designing a novel adaptive-gain control scheme that requires limited a priori information on the structure of the game. Specifically, we proposed a control scheme that uses information at the population level to adaptively change the individual-level utilities associated with the game played by the individuals. We demonstrated that the proposed closed-loop controller is able to successfully steer a population to any locally (but not globally) stable equilibrium and to any unstable consensus state, establishing easy-to-implement sufficient conditions to solve the equilibrium selection problem, with applications spanning from promoting social change to favoring cooperation in social dilemmas~\cite{kreindler2014rapid_diffusion,ye2021nat,Stella2022cooperation}. Moreover, we demonstrated how the proposed controller can be tailored to steer the system to a state that is not even an equilibrium of the uncontrolled replicator equation, with potential applications in congestion problems for infrastructure networks~\cite{Jiang2014,Como2022traffic}.

Our novel control scheme paves the way for several lines of future research. First, this paper provides sufficient conditions for solving the equilibrium selection problem. However, some open questions remain. In particular, future research should investigate to what extent the sufficient conditions on the design of the adaptation rate can be relaxed. Especially, for the set-point regulation problem, numerical simulations suggest that the extension to general nonlinear adaptation rates may be nontrivial and requires the development of different theoretical tools than the Lyapunov-like functions used in Theorem~\ref{th:creation}, such as passivity-based approaches~\cite{Park2019}. Furthermore, future efforts should focus on extending the control method to solve the set-point regulation for coordination games, and to deal with more general classes of revision protocols, such as imitation and best-response dynamics~\cite{Sandholm2010}, and populations characterized by heterogeneity and a network structure~\cite{Madeo2015,Como2021,ramazi2016networks,Le2024}. Second, our theoretical results establish that the equilibrium selection problem can be solved using an adaptive-gain control scheme with a wide range of different adaptation rates. Numerical simulations suggest that the choice of tuning parameters substantially impact convergence speed and control cost. This outlines a promising line of  research towards optimal design of the control parameters. Third, in this paper we presented different real-world scenarios that can be formalized as an equilibrium selection problem (e.g., promoting social change or cooperation in social dilemmas) and we discussed how the control policy obtained using our adaptive-gain controller can be mapped into real actions that public authorities can enact. Future efforts should  focus on exploring the feasibility of such real-world implementations, extending the mathematical framework to account for noise and control actions that can be enacted in discrete-time (e.g., in an event-triggered fashion), toward closing the loop between theoretical research and real-world applications. \bigskip

{\bf Acknowledgments. }This work was partially supported by the Western Australian Government (Premier's Science Fellowship Program) and was carried out within the FAIR---Future Artificial Intelligence Research and received funding from the European Union Next-GenerationEU (PIANO NAZIONALE DI RIPRESA E RESILIENZA (PNRR) – MISSIONE 4 COMPONENTE 2, INVESTIMENTO 1.3 – D.D. 1555 11/10/2022, PE00000013). This manuscript reflects only the authors’ views and opinions, neither the European Union nor the European Commission can be considered responsible for them.


\end{document}